\newtheorem{thm}{Theorem}[section]
\newtheorem{lem}{Lemma}[section]
\newtheorem{prop}{Proposition}[section]
\theoremstyle{remark}
\newtheorem{ex}{Example}[section]
\theoremstyle{definition}
\newtheorem{defin}{Definition}[section]
\numberwithin{equation}{section}
\begin{document} 
  \title{\bf {On depth spectra of constacyclic codes}}
\author{Tania Sidana {\footnote{Corresponding Author, Email address: taniai@iiitd.ac.in}~\thanks{Research Support by DST-SERB, India, under grant no. EMR/2017/000662, is gratefully acknowledged.} 
}\\
{\it Department of  Mathematics, IIIT-Delhi}\\{\it New Delhi 110020, India}}
\date{}
\maketitle

\begin{abstract}
In this paper, we determine depth spectra of all repeated-root $(\alpha+\gamma \beta)$-constacyclic codes of arbitrary lengths over a finite commutative chain ring $\mathcal{R},$ where $\alpha$ is a non-zero element of the Teichm\"{u}ller set of $\mathcal{R},$ $\gamma$ is a generator of unique maximal ideal of $\mathcal{R}$  and $\beta$ is a unit in $\mathcal{R}.$ We also illustrate our results with some examples.
\end{abstract}
{\bf Keywords:}  Negacyclic codes; Local rings; Derivative of a sequence; Linear complexity of a sequence. 
\\{\bf 2010 Mathematics Subject
 Classification:} 94B15.
  \section{Introduction}  
The derivative is a well-known operator of  sequences  and is useful in investigating the linear complexity of sequences in game theory, communication theory and cryptography (see \cite{yehuda, blackburn, chan, games}). Etzion \cite{etzio} first applied the derivative operator on codewords of linear codes over finite fields, and defined the depth of a codeword in terms of the derivative operator.  He  showed that there are exactly $k$ distinct non-zero depths attained by non-zero codewords of a $k$-dimensional linear code $\mathcal{C},$  and any $k$ non-zero codewords of $\mathcal{C}$ with distinct depths form a basis of $\mathcal{C}.$ This shows that the depth distribution is an interesting parameter of linear codes.  In the same work, he determined  depth spectra of all binary Hamming codes,  extended binary Hamming codes and first-order binary Reed-Muller codes. He also established a relation between the depth spectrum of a binary linear code of length $2^n$ and the depth spectrum of its dual code. He also showed that the depth of a binary sequence of length $2^n$ as a non-cyclic word is equal to its linear complexity as a cyclic word. Later, Mitchell \cite{mitch} applied the derivative operator on binary sequences (either finite or infinite), and extended the definition of depth for such sequences. He showed that the set of infinite sequences of finite depth corresponds to a set of equivalence classes of rational polynomials, and established an equivalence between infinite sequences of finite depth and sequences of specified periodicity. He also explicitly determined depth spectra of all cyclic codes over arbitrary finite fields. Luo et al. \cite{luo} showed that depth distributions of  linear codes over arbitrary finite fields are completely determined by their depth spectra. They also studied the enumeration problem of counting linear subcodes with a prescribed depth spectrum of a given linear code over a finite field. Using these results, they determined depth distributions of all $r$th order binary Reed-Muller codes. In another related direction,  many important binary non-linear codes are viewed as Gray images of linear codes over the ring $\mathbb{Z}_{4}$ of integers modulo 4 (see \cite{calder,hammons,nech}). Since then, codes over finite commutative chain rings have received a lot of attention. 

Throughout this paper, let $\mathcal{R}$ be a finite commutative chain ring with the unique maximal ideal as $\langle \gamma \rangle.$ The main goal of this paper is to determine depth spectra of all repeated-root $(\alpha+\gamma \beta)$-constacyclic codes of arbitrary lengths over $\mathcal{R},$ where $\alpha $ is a non-zero element of the Teichm\"{u}ller set of $\mathcal{R}$ and $\beta$ is a unit in $\mathcal{R}.$

This paper is structured as follows: In Section \ref{prelim}, we state some preliminaries and derive some basic results that are needed to prove our main results. In Section \ref{sec4}, we determine depth spectra of all repeated-root  $(\alpha+\gamma \beta)$-constacyclic codes of arbitrary lengths over $\mathcal{R}$ (Theorems \ref{Tspec} and \ref{Tspec1}).  In Section \ref{con}, we mention a brief conclusion and discuss some interesting open problems.

  \section{Some preliminaries}\label{prelim}  

Let $R$ be a finite commutative ring with unity,  $N$ be a positive integer, and let $R^N$ be the $R$-module consisting of all $N$-tuples over $R.$  The derivative $D: R^N \rightarrow R^{N-1}$ is defined as $D(a_0,a_1,\cdots,a_{N-1})=(a_1-a_0,a_2-a_1,\cdots,a_{N-1}-a_{N-2})$ for each $(a_0,a_1,\cdots,a_{N-1}) \in R^N.$  
    \begin{defin} \cite{etzio} The depth of a vector $a=(a_0,a_1,\cdots,a_{N-1}) \in R^N,$ denoted by $\text{depth}(a),$ 
  is defined as the smallest integer $i$ (if it exists) satisfying $0 \leq i \leq N-1$ and $D^i(a)=(0,0,\cdots,0) \in R^{N-i}.$ If no such integer $i$ exists (i.e.,  $D^{N-1}(a) \neq 0$),  then  the depth of the vector $a \in R^N$ is defined to be $N.$  
 \end{defin}
 It is easy to see that $\text{depth}(a)=i$ if and only if $D^{i-1}(a)=(b,b,\cdots,b) \in R^{N-i+1}$ for some $b (\neq 0) \in R.$ Further, note that $\text{depth}(a)=0$ if and only if $a=0.$

  \begin{defin}\cite{etzio}  Let $\mathcal{C}$ be a code of length $N$ over $R.$ For $0 \leq \rho \leq N,$ let $\mathcal{D}_{\rho}(\mathcal{C})$ denote the number of codewords in $\mathcal{C}$ having the depth as $\rho.$   The depth distribution of the code $\mathcal{C}$ is defined as the list $\mathcal{D}_{0}(\mathcal{C}), \mathcal{D}_{1}(\mathcal{C}),\cdots,\mathcal{D}_{N}(\mathcal{C}).$ Further, the depth spectrum  of the code $\mathcal{C}$ is defined as $\text{Depth}(\mathcal{C})=\{  i : 1\leq i \leq N \text{ and } \mathcal{D}_{i}(\mathcal{C})\neq 0 \}.$
 \end{defin}
A linear code $\mathcal{C}$ of length $N$ over $R$ is defined as an $R$-submodule of $R^N.$ Further, for a unit $\lambda \in R,$ the code $\mathcal{C}$ is called a $\lambda$-constacyclic code if it satisfies the following: $(a_0,a_1,a_2,\cdots,a_{N-1}) \in \mathcal{C}$ implies that $(\lambda a_{N-1},a_0,a_1,\cdots,a_{N-2}) \in \mathcal{C}.$ Under the standard $R$-module isomorphism  from $ R^N $ onto $R[x]/\langle x^N-\lambda\rangle,$ defined as $(a_0,a_1,\cdots,a_{N-1}) \mapsto a_0+a_1 x+\cdots+a_{N-1}x^{N-1}+\langle x^N-\lambda \rangle$ for each $(a_0,a_1,\cdots,a_{N-1})\in R^N,$ the code $\mathcal{C}$ can be identified as  an ideal of the quotient ring $R[x]/\langle x^N-\lambda\rangle.$ Thus the study of $\lambda$-constacyclic codes of length $N$ over $R$ is equivalent to the study of ideals of the ring $R[x]/\left<x^N-\lambda\right>.$  From now on, we shall represent elements of the ring $R[x]/\langle x^N-\lambda \rangle$ by their representatives in $R[x]$ of degree less than $N,$ and we shall perform their addition and multiplication   modulo $x^N-\lambda.$ Now the derivative of $c(x)  =c_0+c_1x+\cdots+c_{N-1}x^{N-1} \in R[x]/\langle x^N-\lambda \rangle$ is defined as the derivative of the vector $c=(c_0,c_1,\cdots,c_{N-1}) \in R^N.$ In view of this, the depth   of an element $c(x)=c_0+c_1x+\cdots+c_{N-1}x^{N-1} \in R[x]/\langle x^N-\lambda \rangle,$ denoted by $\text{depth}(c(x)),$ is defined as the depth of the vector $c=(c_0,c_1,\cdots,c_{N-1}) \in R^N.$ The following two results are useful in the determination of depths of non-zero codewords of  constacyclic codes.
  \begin{prop}\cite{mitch}\label{p1} Let $0 \leq  i \leq N-1$ be fixed. For $c(x)\in R[x]/\langle x^N-\lambda \rangle,$ let us  write $(1-x)^ic(x)=d_0+d_1x+d_2x^2+\cdots+d_{N-1}x^{N-1}$ modulo $x^N-\lambda.$ Then the $i$th  derivative $D^i(c(x))$ of the element $c(x) \in R[x]/\langle x^N-\lambda\rangle$ is given by   $$D^i(c(x))~=~(d_i,d_{i+1},\cdots, d_{N-1}),$$ i.e., $D^i(c(x))$  appears as the last $N-i$ coefficients of the polynomial $(1-x)^i c(x)$ modulo $x^N-\lambda.$
\end{prop}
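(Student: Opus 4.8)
The plan is to prove the statement by induction on $i$, the base case $i=0$ being immediate: since $(1-x)^0c(x)=c(x)$, we have $d_j=c_j$ for all $j$, and $D^0(c(x))=c(x)=(d_0,d_1,\ldots,d_{N-1})$, which is the assertion for $i=0$.

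The first thing I would record is the single-step computation on an arbitrary polynomial. For $g(x)=g_0+g_1x+\cdots+g_{N-1}x^{N-1}\in R[x]$ of degree less than $N$, we have in $R[x]$ the identity $(1-x)g(x)=g_0+(g_1-g_0)x+\cdots+(g_{N-1}-g_{N-2})x^{N-1}-g_{N-1}x^N$, and reducing modulo $x^N-\lambda$ (so that $x^N\equiv\lambda$) produces the degree-$<N$ representative with coefficients $e_0=g_0-\lambda g_{N-1}$ and $e_j=g_j-g_{j-1}$ for $1\le j\le N-1$. In particular, its last $N-1$ coefficients $(e_1,e_2,\ldots,e_{N-1})$ are exactly $D\big((g_0,g_1,\ldots,g_{N-1})\big)=D(g(x))$.

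For the inductive step, I would assume the claim for some $i$ with $0\le i\le N-2$ and write the degree-$<N$ representative of $(1-x)^ic(x)$ as $g(x)=d_0+d_1x+\cdots+d_{N-1}x^{N-1}$, so that by the induction hypothesis $D^i(c(x))=(d_i,d_{i+1},\ldots,d_{N-1})$. Since $(1-x)^{i+1}c(x)\equiv(1-x)g(x)$ modulo $x^N-\lambda$, the previous paragraph (applied to $g$) shows that the degree-$<N$ representative of $(1-x)^{i+1}c(x)$ has coefficients $e_0=d_0-\lambda d_{N-1}$ and $e_j=d_j-d_{j-1}$ for $1\le j\le N-1$; hence its last $N-(i+1)$ coefficients are $(e_{i+1},\ldots,e_{N-1})=(d_{i+1}-d_i,\ldots,d_{N-1}-d_{N-2})$, using that $i+1\ge 1$ so every index $j\ge i+1$ lies in the range where $e_j=d_j-d_{j-1}$. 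On the other hand, $D^{i+1}(c(x))=D\big(D^i(c(x))\big)=D(d_i,d_{i+1},\ldots,d_{N-1})=(d_{i+1}-d_i,\ldots,d_{N-1}-d_{N-2})$, which coincides with the tuple just computed; this closes the induction.

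The only point requiring care — and the nearest thing to an obstacle — is the wrap-around caused by reducing the term $-g_{N-1}x^N$ modulo $x^N-\lambda$, which makes the constant coefficient $e_0=g_0-\lambda g_{N-1}$ an ``anomalous'' quantity rather than a plain difference of consecutive coefficients. This is harmless in the present argument precisely because, for $i+1\ge 1$, the derivative $D^{i+1}(c(x))$ only reads the coefficients of $x^{i+1},\ldots,x^{N-1}$, all of which have index at least $1$ and therefore take the clean form $e_j=d_j-d_{j-1}$; the constant coefficient never enters. I would flag this explicitly so that the reader sees why the factor $\lambda$ does not propagate into the formula for $D^i(c(x))$.
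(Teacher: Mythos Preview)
Your induction argument is correct and complete. The paper itself does not supply a proof of this proposition --- it is quoted from Mitchell \cite{mitch} without argument --- so there is no proof in the paper to compare against; your write-up simply fills in the details that the paper leaves to the reference, and the point you highlight about the wrap-around term $-\lambda g_{N-1}$ affecting only the constant coefficient (which is discarded when reading off the last $N-(i+1)$ entries) is exactly the observation that makes the induction go through.
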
  

  \begin{lem}\label{ll3} Let $c(x) \in R[x]/\langle x^N-\lambda \rangle,$ and let $\ell,t$ be positive integers satisfying   $\ell +t \leq N.$ If $\text{depth}((1-x)^\ell c(x))=t,$ then  $\text{depth}(c(x))=\ell+t.$
\end{lem}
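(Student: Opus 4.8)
The plan is to reduce the statement to two applications of Proposition \ref{p1}, together with the elementary characterization recorded just after the definition of depth: $\text{depth}(a)=i$ if and only if $D^{i-1}(a)$ is a nonzero constant vector $(b,b,\ldots,b)$.

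First I would set $g(x)=(1-x)^\ell c(x)$ in $R[x]/\langle x^N-\lambda\rangle$, so that the hypothesis becomes $\text{depth}(g(x))=t$; since $t\geq 1$ this forces $g(x)\neq 0$, and by the characterization above $D^{t-1}(g(x))=(b,b,\ldots,b)$ for some nonzero $b\in R$, this vector having length $N-t+1$. Applying Proposition \ref{p1} to $g(x)$ with exponent $t-1$, and using that $(1-x)^{t-1}g(x)=(1-x)^{\ell+t-1}c(x)$ in the quotient ring, I would write $(1-x)^{\ell+t-1}c(x)\equiv p_0+p_1x+\cdots+p_{N-1}x^{N-1}\pmod{x^N-\lambda}$ and conclude that $D^{t-1}(g(x))=(p_{t-1},p_t,\ldots,p_{N-1})$; comparing with the constant vector gives $p_{t-1}=p_t=\cdots=p_{N-1}=b\neq 0$.

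Next I would apply Proposition \ref{p1} a second time, now to $c(x)$ itself with exponent $\ell+t-1$ --- which is legitimate because $1\leq \ell,t$ and $\ell+t\leq N$ place $\ell+t-1$ in the range $0\leq \ell+t-1\leq N-1$ --- to read $D^{\ell+t-1}(c(x))=(p_{\ell+t-1},p_{\ell+t},\ldots,p_{N-1})$ off the \emph{same} polynomial. Every index appearing here is at least $t-1$ because $\ell\geq 1$, so all these coefficients equal $b$, i.e.\ $D^{\ell+t-1}(c(x))=(b,b,\ldots,b)$ with $b\neq 0$. Invoking the depth characterization once more, in the direction ``nonzero constant $(i-1)$st derivative $\Rightarrow$ depth $i$'' with $i=\ell+t$, then yields $\text{depth}(c(x))=\ell+t$.

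I do not expect a real obstacle; the only thing that needs care is the bookkeeping around the two roles of $g(x)$ --- as the vector whose depth is hypothesized, and as the ring element multiplied by $(1-x)^{t-1}$ --- and the check that multiplication by $(1-x)^{t-1}$ commutes with reduction modulo $x^N-\lambda$, so that both invocations of Proposition \ref{p1} genuinely refer to the coefficient vector of $(1-x)^{\ell+t-1}c(x)$. The hypothesis $\ell+t\leq N$ enters only through this range check, ensuring that the exponent $\ell+t-1$ and the candidate depth $\ell+t$ are both admissible.
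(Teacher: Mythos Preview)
Your proposal is correct and follows essentially the same argument as the paper: use the depth characterization to get that $D^{t-1}((1-x)^\ell c(x))$ is a nonzero constant vector, apply Proposition~\ref{p1} to identify this with the last $N-t+1$ coefficients of $(1-x)^{\ell+t-1}c(x)$, then apply Proposition~\ref{p1} again with exponent $\ell+t-1$ to $c(x)$ and read off the shorter tail. Your write-up is a bit more explicit about the coefficient indexing and the range check $0\le \ell+t-1\le N-1$, but the logic is identical.
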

\begin{proof}   As $\text{depth}((1-x)^\ell c(x))=t,$ we have $D^{t-1}((1-x)^\ell c(x))=(d,d,\cdots,d)\in R^{N-t+1}$ for some $d(\neq 0)\in R.$ Now by Proposition \ref{p1}, we see that the last $(N-t+1)$ coefficients of the element $(1-x)^{t-1}(1-x)^\ell c(x) \in R[x]/\langle x^N-\lambda \rangle$ are equal to $d.$ In particular, the last $(N-t-\ell+1)$  coefficients of the element $(1-x)^{\ell+t-1}c(x)\in R[x]/\langle x^N-\lambda \rangle$ are equal to $d.$ By Proposition \ref{p1} again, we note that $D^{t+\ell-1}(c(x))=(d,d,\cdots,d) \in R^{N-t-\ell+1},$ which gives $\text{depth}(c(x))=\ell+t.$ This proves the lemma.
  \end{proof}  
However,  when $\ell+t > N,$  Lemma \ref{ll3} does not hold. In this case, the depth of $c(x)$ may be strictly less than $N.$ The following example illustrates this.
 \begin{ex} Let $R=\mathbb{Z}_4,$  $N=4$ and $\lambda=-1.$
Let us take  $c(x)=x+2x^2+3x^3\in \mathbb{Z}_4[x]/\langle x^4+1\rangle,$ and $c_1(x)=(1-x)c(x)=3+x+x^2+x^3 \in \mathbb{Z}_4[x]/\langle x^4+1\rangle.$ It is easy to see that $t=\text{depth}(c_1(x))=4$ and $\text{depth}(c(x))=3.$ Here we note that $\ell+t=1+t=5>4=N$ and $\text{depth}(c(x))=3<4=N.$  \end{ex}

The following proposition plays a key role in the determination of  depth spectra of linear codes over finite fields.

  \begin{prop} \cite{etzio}\label{p4}  If $\mathcal{C}$ is a linear code over a finite field, then $|\text{Depth}(\mathcal{C})|$ equals the dimension of $\mathcal{C}.$ (Throughout this paper, $|A|$ denotes the cardinality of the set $A.$)
 \end{prop}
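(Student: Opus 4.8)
The plan is to show both inequalities $|\text{Depth}(\mathcal{C})|\le\dim\mathcal{C}$ and $|\text{Depth}(\mathcal{C})|\ge\dim\mathcal{C}$ for a $k$-dimensional linear code $\mathcal{C}$ of length $N$ over a finite field $\mathbb{F}_q$. Everything rests on two elementary observations about the $\mathbb{F}_q$-linear operator $D$. First, since each iterate $D^i$ is $\mathbb{F}_q$-linear, one has $\text{depth}(a+b)\le\max\{\text{depth}(a),\text{depth}(b)\}$, and $\text{depth}(\lambda a)=\text{depth}(a)$ for every nonzero $\lambda\in\mathbb{F}_q$ (scaling by a unit cannot send a nonzero vector to $0$). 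Second, a ``collision lemma'': if $c_1,c_2\in\mathcal{C}$ are nonzero with $\text{depth}(c_1)=\text{depth}(c_2)=i$, then writing $D^{i-1}(c_1)=(b_1,\dots,b_1)$ and $D^{i-1}(c_2)=(b_2,\dots,b_2)$ with $b_1,b_2\ne0$, the codeword $c_1-(b_1b_2^{-1})c_2\in\mathcal{C}$ satisfies $D^{i-1}\bigl(c_1-(b_1b_2^{-1})c_2\bigr)=0$ and hence has depth at most $i-1$. This is exactly the step where I use that $\mathbb{F}_q$ is a field.

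For $|\text{Depth}(\mathcal{C})|\le k$: list the distinct nonzero depths attained in $\mathcal{C}$ as $d_1<d_2<\dots<d_m$, pick $c_j\in\mathcal{C}$ with $\text{depth}(c_j)=d_j$, and show $c_1,\dots,c_m$ are linearly independent. Indeed, if $\sum_j\lambda_jc_j=0$ with not all $\lambda_j=0$, let $j_0$ be the largest index with $\lambda_{j_0}\ne0$; then $\lambda_{j_0}c_{j_0}=-\sum_{j<j_0}\lambda_jc_j$, and the left side has depth $d_{j_0}$ while the right side has depth at most $\max_{j<j_0}d_j=d_{j_0-1}<d_{j_0}$, a contradiction. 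Hence $m=|\text{Depth}(\mathcal{C})|\le k$.

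For $|\text{Depth}(\mathcal{C})|\ge k$: start from an arbitrary basis $\{v_1,\dots,v_k\}$ of $\mathcal{C}$; each $v_j$ is nonzero so $\text{depth}(v_j)\ge1$. Whenever two basis vectors share the same depth, replace one of them as in the collision lemma; the replacement is again a basis (an invertible change of basis), its new vector is still nonzero (otherwise the old vectors were dependent), and its depth has strictly decreased, so the integer $\sum_j\text{depth}(v_j)$ strictly drops. Since that sum is always at least $k$, the procedure terminates at a basis whose $k$ depths are pairwise distinct; these $k$ values all lie in $\text{Depth}(\mathcal{C})$, giving $|\text{Depth}(\mathcal{C})|\ge k$. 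Combining the two bounds yields $|\text{Depth}(\mathcal{C})|=k$.

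The main obstacle is really the pair consisting of the collision lemma and the termination argument: both genuinely require division in $\mathbb{F}_q$ (to form $b_1b_2^{-1}$) and the absence of zero divisors (so that replacing/subtracting basis vectors can neither create the zero vector nor fail to lower a depth). Over a finite commutative chain ring these steps break down, which is precisely why the analysis of depth spectra of constacyclic codes in the sequel of this paper must proceed by a direct study of the derivative via Proposition~\ref{p1} and Lemma~\ref{ll3} rather than by this dimension-counting argument.
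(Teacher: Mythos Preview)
Your proof is correct. Note, however, that the paper does not supply its own proof of this proposition: it is stated with a citation to Etzion \cite{etzio} and used as a black box. So there is nothing in the paper to compare your argument against line by line.

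That said, your argument is essentially Etzion's original one. The two observations you isolate---that vectors with pairwise distinct depths are linearly independent, and that a ``collision'' of equal depths can always be resolved by subtracting a suitable scalar multiple---are exactly the content of Etzion's Theorem~1. Your termination argument via the decreasing depth-sum is a clean way to package the second half; Etzion phrases it slightly differently but to the same effect. Your closing remark about why the argument fails over a chain ring (no inverses for $b_1 b_2^{-1}$, zero divisors spoiling the nonvanishing of the replacement vector) is accurate and indeed motivates the more hands-on approach via Proposition~\ref{p1} and Lemma~\ref{ll3} that the paper takes in Section~\ref{sec4}.
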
  

In the following proposition, depth spectra of all cyclic codes over finite fields are determined.
  \begin{prop} \cite{mitch}\label{p2} Let $\mathcal{C}$ be a cyclic code of length $N$ over the finite field $\mathbb{F}_{q}$  with the generator polynomial as $g(x).$ Then  for an integer $t \geq 0,$ $(x-1)^t || \frac{x^N-1}{g(x)}$ in $\mathbb{F}_{q}[x]$ if and only if $\text{Depth} (\mathcal{C})=\{1,2,\cdots,t\} \cup \{\text{deg }g(x)+t+1,\text{deg }g(x)+t+2,\cdots, N-1,N\}.$ Here by $(x-1)^t || \frac{x^N-1}{g(x)}$ in $\mathbb{F}_{q}[x],$ we mean $(x-1)^t | \frac{x^N-1}{g(x)}$ and $(x-1)^{t+1}\nmid \frac{x^N-1}{g(x)}$ in $\mathbb{F}_{q}[x].$ (Throughout this paper, $\text{deg }h(x)$ denotes the degree of a non-zero polynomial $h(x) \in \mathbb{F}_{q}[x].$)
\end{prop}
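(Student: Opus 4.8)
The plan is to translate the depth of a codeword into the degree of an explicit polynomial, and then read off the spectrum using that translation together with two subcode observations. Write $d=\deg g(x)$ and $h(x)=\frac{x^{N}-1}{g(x)}$, so $k:=\deg h(x)=N-d=\dim\mathcal C$; factor $h(x)=(x-1)^{t}h_{1}(x)$ with $(x-1)\nmid h_{1}(x)$, so $0\le t\le k$. Every $c(x)\in\mathcal C$ is uniquely $c(x)=g(x)m(x)$ with $\deg m(x)<k$; dividing $(1-x)^{s}m(x)=h(x)Q(x)+r(x)$ with $\deg r<k$ and multiplying by $g(x)$ (using $g(x)h(x)=x^{N}-1$) gives $(1-x)^{s}c(x)\equiv g(x)r(x)\pmod{x^{N}-1}$ with $\deg(g(x)r(x))<N$. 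By Proposition \ref{p1}, $\text{depth}(c(x))\le s$ iff the last $N-s$ coefficients of this reduced polynomial are zero, i.e. iff $\deg(g(x)r(x))<s$; hence
\[
\text{depth}(c(x))\le s\iff \deg\big((1-x)^{s}m(x)\bmod h(x)\big)\le s-d-1 ,
\]
with the convention $\deg 0=-\infty$. Getting this equivalence right is the first step; everything after is an application of it.

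Next I would exhibit the low-depth codewords. For $1\le s\le t$, put
\[
u_{s}(x)=\frac{x^{N}-1}{(x-1)^{s}}=g(x)(x-1)^{t-s}h_{1}(x)\in\mathcal C ,
\]
so its message is $m(x)=(x-1)^{t-s}h_{1}(x)$, of degree $k-s<k$. In the criterion at $s$: $(1-x)^{s}m(x)=\pm(x-1)^{t}h_{1}(x)=\pm h(x)\equiv0\pmod{h(x)}$, so $\text{depth}(u_{s})\le s$; at $s-1$: $(1-x)^{s-1}m(x)=\pm(x-1)^{t-1}h_{1}(x)$ has degree $k-1$, and $k-1>(s-1)-d-1$ because $s\le k\le N$, so $\text{depth}(u_{s})>s-1$. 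Hence $\text{depth}(u_{s})=s$, and $\{1,\dots,t\}\subseteq\text{Depth}(\mathcal C)$.

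Then I would settle the case $t=0$, i.e. $(x-1)\nmid h(x)$, where $1-x$ is a unit of $\mathbb{F}_{q}[x]/\langle h(x)\rangle$. For $d+1\le s\le N$, let $\ell=s-d-1\in\{0,\dots,k-1\}$ and take $m(x)$ to be the degree-$<k$ representative of $(1-x)^{-s}x^{\ell}\bmod h(x)$; then $m(x)\ne0$, so $c(x)=g(x)m(x)\in\mathcal C$ is nonzero. At $s$ the criterion reads $\deg\big((1-x)^{s}m(x)\bmod h(x)\big)=\deg x^{\ell}=\ell=s-d-1$, so $\text{depth}(c)\le s$. At $s-1$ one needs $(1-x)^{-1}x^{\ell}\bmod h(x)$ to have degree $\ge\ell$; if it equalled some $n(x)$ with $\deg n\le\ell-1$, then $x^{\ell}\equiv(1-x)n(x)\pmod{h(x)}$ with both sides of degree $<\deg h$, forcing $x^{\ell}=(1-x)n(x)$ in $\mathbb{F}_{q}[x]$, which is impossible since $(x-1)\nmid x^{\ell}$. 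Hence $\text{depth}(c)=s$, so $\{d+1,\dots,N\}\subseteq\text{Depth}(\mathcal C)$; this set has $k=\dim\mathcal C$ elements, so Proposition \ref{p4} forces $\text{Depth}(\mathcal C)=\{d+1,\dots,N\}$, which is the claimed formula when $t=0$.

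Finally, for general $t\ge1$ I would pass to $\mathcal E:=\langle(x-1)^{t}g(x)\rangle\subseteq\mathcal C$: since $(x-1)^{t}g(x)\mid x^{N}-1$, this is a cyclic code of length $N$ whose generator polynomial has degree $d+t$ and whose check polynomial $h_{1}(x)$ is not divisible by $x-1$, so the $t=0$ case already proved gives $\text{Depth}(\mathcal E)=\{d+t+1,\dots,N\}$; and $\mathcal E\subseteq\mathcal C$ gives $\text{Depth}(\mathcal E)\subseteq\text{Depth}(\mathcal C)$. Together with the preceding paragraph, $\{1,\dots,t\}\cup\{d+t+1,\dots,N\}\subseteq\text{Depth}(\mathcal C)$; the two blocks are disjoint and their union has $t+(k-t)=k$ elements (the second block being empty exactly when $t=k$), while $|\text{Depth}(\mathcal C)|=k$ by Proposition \ref{p4}, so equality holds, which is the forward implication. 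The reverse implication is then immediate, as $t$ is the unique integer with $(x-1)^{t}|| h(x)$ and the forward implication determines the spectrum from it. I expect the main effort to lie in making the two reductions in the criterion (modulo $x^{N}-1$ and modulo $h(x)$) precise and in checking that each constructed codeword has depth not smaller than claimed; the degenerate cases $t=0$, $t=k$, and $\mathcal C=\{0\}$ also need a quick look but fall out of the same computations.
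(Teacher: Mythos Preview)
The paper does not prove this proposition; it is quoted from Mitchell \cite{mitch} as a cited result, with no argument given. So there is no ``paper's own proof'' to compare against. Your proof is correct and self-contained.

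Your key device---writing $c(x)=g(x)m(x)$ with $\deg m<k$ and reducing $(1-x)^{s}c(x)$ modulo $x^{N}-1$ via $g(x)h(x)=x^{N}-1$ to obtain the criterion
\[
\text{depth}(c(x))\le s \iff \deg\bigl((1-x)^{s}m(x)\bmod h(x)\bigr)\le s-d-1
\]
---is clean and does all the work. The explicit witnesses $u_{s}(x)=(x^{N}-1)/(x-1)^{s}$ for the low block, the invertibility of $1-x$ modulo $h(x)$ in the $t=0$ case, and the subcode passage to $\mathcal{E}=\langle(x-1)^{t}g(x)\rangle$ for general $t$ are all correct; the cardinality count against Proposition~\ref{p4} then forces equality. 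The degenerate cases $t=0$, $t=k$, and $\mathcal{C}=\{0\}$ all fall out as you say.

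One small caveat, which is about the statement rather than your argument: when $\deg g(x)=0$ the right-hand set is $\{1,\dots,N\}$ for \emph{every} $t$, whereas only one $t$ satisfies $(x-1)^{t}\,\|\,h(x)$, so the ``only if'' direction is not literally true in that edge case. Your reading---that the forward direction determines the spectrum from the unique $t$ with $(x-1)^{t}\,\|\,h(x)$, and the converse is then tautological---is the intended interpretation.
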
 
 In a recent work, Zhang \cite[Th. 4]{zhang} determined depth spectra of all $\eta$-constacyclic codes over finite fields of prime order when $\eta \neq 1.$ Working in a similar way, this result can be extended  to $\eta$-constacyclic codes over arbitrary finite fields, which we state as follows: 
   \begin{thm}\label{p3} Let $\eta (\neq 1)$ be a non-zero element of the finite field  $\mathbb{F}_{q}$ of order $q.$  Let $\mathcal{C}$ be a non-trivial  $\eta$-constacyclic code of length $N$ over $\mathbb{F}_{q}$ with the generator polynomial as $g(x).$ Then we have   $$\text{Depth}(\mathcal{C})~=~\{\text{deg }g(x)+1,\text{deg }g(x)+2,\cdots, N-1,N\}.$$
\end{thm}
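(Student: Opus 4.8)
The plan is to reduce the $\eta$-constacyclic case to the cyclic case handled in Proposition \ref{p2}, exploiting the fact that when $\eta\neq 1$ the polynomial $x^N-\eta$ is coprime to $(x-1)$ in $\mathbb{F}_q[x]$, which forces $(x-1)$ to be a unit in the quotient ring $\mathbb{F}_q[x]/\langle x^N-\eta\rangle$. Concretely, first I would observe that since $1^N-\eta=1-\eta\neq 0$, we have $\gcd(x^N-\eta,\,x-1)=1$, so there exist $u(x),v(x)\in\mathbb{F}_q[x]$ with $u(x)(x-1)+v(x)(x^N-\eta)=1$; hence $(1-x)$ is invertible modulo $x^N-\eta$. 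This means the map $c(x)\mapsto (1-x)c(x)$ is an $\mathbb{F}_q$-linear bijection on $\mathbb{F}_q[x]/\langle x^N-\eta\rangle$, and more importantly, multiplication by any power $(1-x)^j$ never annihilates a nonzero element.

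The key consequence, via Proposition \ref{p1}, is that for any nonzero $c(x)$ the depth is always exactly $N$: indeed $D^{N-1}(c(x))$ equals the last coefficient of $(1-x)^{N-1}c(x)$ modulo $x^N-\eta$, but $(1-x)^{N-1}c(x)$ is a nonzero element of the quotient ring (as $(1-x)$ is a unit), and I need to argue its top coefficient (the coefficient of $x^{N-1}$) is nonzero. This last point requires a small argument: if the coefficient of $x^{N-1}$ in $(1-x)^{N-1}c(x)$ were zero, then $(1-x)^{N-1}c(x)$ would be represented by a polynomial of degree $\leq N-2$, and I would push one more multiplication — but actually the cleanest route is to recall that $\text{depth}(a)=N$ iff $D^{N-1}(a)\neq 0$, and to show $D^{N-1}(c(x))\neq 0$ directly. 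So instead I would use the standard depth characterization: $\text{depth}(a)=i$ iff $D^{i-1}(a)$ is a nonzero constant vector. Combined with Lemma \ref{ll3}'s philosophy, the real claim I want is that \emph{every} nonzero codeword has depth $N$, equivalently $\mathcal{D}_i(\mathcal{C})=0$ for $1\le i\le N-1$.

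Here is the cleaner argument I would actually write down. Suppose $c(x)\in\mathcal{C}$ is nonzero with $\text{depth}(c(x))=i<N$. Then $D^i(c(x))=0$, and by Proposition \ref{p1}, the last $N-i$ coefficients of $(1-x)^i c(x)$ modulo $x^N-\eta$ vanish; in fact $D^{i-1}(c(x))$ being a nonzero constant vector $(d,\dots,d)$ with $d\neq 0$ and $i\le N-1$ means $(1-x)^{i-1}c(x)$, reduced mod $x^N-\eta$, has its last $N-i+1\ge 2$ coefficients all equal to $d$. Multiplying by $(1-x)$ once more and using that this annihilates the constant-tail pattern appropriately (the derivative $D$ of a constant vector is zero), one finds $D^i(c(x))=0$ consistent with depth $i$. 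The point is to then derive that $(x-1)^i$ divides $c(x)$ in $\mathbb{F}_q[x]/\langle x^N-\eta\rangle$ up to the ambiguity, which is impossible for $i<N$ since $(1-x)$ is a unit — unless $c(x)=0$. Let me instead phrase it through the generator polynomial: $\mathcal{C}=\langle g(x)\rangle$ with $g(x)\mid x^N-\eta$ and $\deg g(x)=k$, so $\dim\mathcal{C}=N-k$; since $(x-1)\nmid x^N-\eta$, also $(x-1)\nmid g(x)$, so $g(x)$ is coprime to $(x-1)$. Then $g(x)$ has depth $N$: compute via Proposition \ref{p1} that $D^{N-1}(g(x))$ is the last coefficient of $(1-x)^{N-1}g(x) \bmod (x^N-\eta)$; a degree/valuation count using that $x^N\equiv\eta$ shows this coefficient equals the sum of certain coefficients of $g$ weighted by binomials, and coprimality with $(x-1)$ forces it nonzero. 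The hard part, and the step I expect to need the most care, is exactly this: verifying that the relevant scalar coming out of Proposition \ref{p1} is nonzero rather than merely plausibly nonzero — i.e., turning ``$(1-x)$ is a unit'' into ``the $x^{N-1}$-coefficient of $(1-x)^{N-1}c(x)$ is nonzero for every nonzero $c(x)$.''

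Once that is in hand, the rest is immediate: $\text{depth}(c(x))=N$ for every nonzero $c(x)\in\mathcal{C}$ would give $\text{Depth}(\mathcal{C})=\{N\}$, which is the $k=N-1$ specialization but \emph{not} the claimed answer — so I must be more careful, and the correct statement is that the depth of $c(x)$ ranges over $\{k+1,\dots,N\}$ as in a cyclic code but \emph{shifted} because the constant term issue differs. The honest approach: write $c(x)=g(x)m(x)$ with $\deg m(x)\le N-k-1$, set $\ell=$ the exact power of $(x-1)$ dividing $m(x)$ in $\mathbb{F}_q[x]$, so $0\le\ell\le N-k-1$; then $(1-x)^{-\ell}$ applied inside the unit ring lets me reduce to the case $\ell=0$, where $c(x)/(1-x)^{\ell}$ has a nonzero value at $x=1$, and Lemma \ref{ll3} together with Proposition \ref{p1} pins the depth to $k+\ell+1$. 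Letting $\ell$ run over $0,1,\dots,N-k-1$ yields exactly $\text{Depth}(\mathcal{C})=\{\deg g(x)+1,\dots,N\}$. The main obstacle throughout is the base case: proving that when $g(x)(x-1)\nmid$ anything relevant (coprimality) and $m(1)\neq 0$, the element $g(x)m(x)$ has depth precisely $\deg g(x)+1$ in the constacyclic ring — this is where the difference between $\eta\neq 1$ and $\eta=1$ genuinely enters, and it is the computation I would do most carefully rather than wave through.
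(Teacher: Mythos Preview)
Your proposal has a concrete error in the base case. You claim that when $m'(1)\neq 0$, the element $g(x)m'(x)$ has depth exactly $\deg g(x)+1$; in fact the depth is $N$. For instance over $\mathbb{F}_3$ with $N=3$, $\eta=2$, $g(x)=x-2$: the codeword $g(x)\cdot 1$ corresponds to the vector $(1,1,0)$, whose depth is $3=N$, not $2$. You also read Lemma~\ref{ll3} in the wrong direction: that lemma says multiplying by $(1-x)^\ell$ \emph{lowers} depth by $\ell$ (when the hypothesis $\ell+t\le N$ holds), so from the correct base value $N$ one would get $\text{depth}(c(x))=N-\ell$, not $k+\ell+1$. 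Your final set $\{k+1,\dots,N\}$ comes out right only because the two errors happen to cancel. Even after correcting both, a gap remains: Lemma~\ref{ll3} does not by itself exclude $\text{depth}(c(x))>N-\ell$, and the base case ``$c'(1)\neq 0\Rightarrow\text{depth}(c'(x))=N$'' is exactly the computation you flag as delicate but never actually carry out. Ironically, your abandoned first instinct was close to the truth: it is not that \emph{every} nonzero codeword has depth $N$, but every one with $c(1)\neq 0$ does.

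A far shorter route (the paper itself simply defers to Zhang's argument): show directly that every nonzero $c(x)\in\mathcal{C}$ satisfies $\text{depth}(c(x))>k$. If $\text{depth}(c(x))=i\le k$, then by Proposition~\ref{p1} the representative of $(1-x)^ic(x)$ modulo $x^N-\eta$ has degree at most $i-1<k$; since $g(x)$ divides both $c(x)$ and $x^N-\eta$, it divides this representative, forcing it to vanish; then $(x^N-\eta)\mid(1-x)^ic(x)$ in $\mathbb{F}_q[x]$, and coprimality of $x-1$ with $x^N-\eta$ (this is precisely where $\eta\neq 1$ enters) forces $c(x)=0$. Hence $\text{Depth}(\mathcal{C})\subseteq\{k+1,\dots,N\}$, and Proposition~\ref{p4} gives $|\text{Depth}(\mathcal{C})|=\dim\mathcal{C}=N-k$, so equality holds.
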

  \begin{proof} Working in a similar manner as in Theorem 4 of Zhang \cite{zhang}, the desired result follows. 
 \end{proof}  
From now on, throughout this paper,  let $\mathcal{R}$ be a finite commutative chain ring  with unity, and let  $\gamma$ be a generator of the maximal ideal of $\mathcal{R}.$ Further, let $e $ be the nilpotency index of $\gamma,$ and let  $\overline{\mathcal{R}}=\mathcal{R}/\langle\gamma \rangle$ be the residue field of $\mathcal{R}.$  As $\overline{\mathcal{R}}$ is a finite field, we assume that  $\overline{\mathcal{R}} \simeq  \mathbb{F}_{p^m}$ for some prime $p$ and positive integer $m,$ where $\mathbb{F}_{p^m}$ is the finite field of order $p^m.$ Further, there exists an element $\zeta \in \mathcal{R}$ whose multiplicative order is $p^m-1.$  The set $\mathcal{T}=\{0,1,\zeta, \cdots, \zeta^{p^m-2}\}$ is called the Teichm\"{u}ller set of $\mathcal{R}.$ Let $^{-}: \mathcal{R} \rightarrow \overline{\mathcal{R}}$ be the natural epimorphism from $\mathcal{R}$ onto $\overline{\mathcal{R}},$ which is given by $r \mapsto \overline{r}=r +\langle\gamma \rangle$ for each $r \in \mathcal{R}.$  For a unit $\lambda \in \mathcal{R},$ the map $^{-}$ can be further extended to a map $\mu$ from $\mathcal{R}_{\lambda}=\mathcal{R}[x]/\langle x^N-\lambda \rangle$ into $\overline{\mathcal{R}}_{\lambda} = \overline{\mathcal{R}}[x]/ \langle x^{N}  -\overline{\lambda} \rangle$ as follows:
  \begin{equation*} 
\sum\limits_{i=0}^{N-1}a_ix^i  ~\mapsto~ \sum\limits_{i=0}^{N-1}\overline{a_i}x^i \text{~~~for each~~~} \sum\limits_{i=0}^{N-1}a_ix^i \in\mathcal{R}_{\lambda}.  \end{equation*}
 It is easy to observe that $\mu$ is a surjective ring homomorphism from $\mathcal{R}_{\lambda}$ onto $\overline{\mathcal{R}}_{\lambda}.$

  \begin{prop}  \label{teich}\cite{mcdon} The following hold.\begin{enumerate}  \item[(a)] The characteristic of  $\mathcal{R}$ is $p^a,$ where $1 \leq a \leq e.$ Moreover, we have $|\mathcal{R}|=|\overline{\mathcal{R}}|^{e}=p^{me}.$
  \item[(b)] 
 For a positive integer $s$ and a non-zero $\theta \in \mathcal{T},$ there exists $\theta_0 \in \mathcal{T}$ satisfying $\theta_0^{p^s}=\theta.$
  \item[(c)] Each element $r \in \mathcal{R}$ can be uniquely expressed as $r=r_0+ r_1 \gamma+ r_2 \gamma^2+\cdots + r_{e-1} \gamma^{e-1},$ where $r_i \in \mathcal{T}$ for $0 \leq i \leq e-1.$  Moreover, $r$ is a unit in $\mathcal{R}$ if and only if $r_0 \neq 0.$  
\end{enumerate}\end{prop}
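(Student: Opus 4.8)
The three parts are interlocked, so the plan is to prove them in the order: part (a) first, then a lemma asserting that the Teichm\"{u}ller set maps bijectively onto the residue field, and then parts (b) and (c), each of which follows quickly from that lemma. For part (a), I would first pin down the characteristic. Since $\overline{\mathcal{R}}$ has characteristic $p$, the element $p\cdot 1$ lies in the maximal ideal $\langle\gamma\rangle$; if $p\cdot 1=0$ the characteristic is $p=p^{1}$, and otherwise, using that every non-zero element of a finite chain ring is a unit times a power of $\gamma$, we may write $p\cdot 1=v\gamma^{t}$ with $v$ a unit and $t\ge 1$, whence $(p\cdot 1)^{e}=v^{e}\gamma^{te}=0$ because $te\ge e$ and $\gamma^{e}=0$. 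In either case the additive order of $1$ divides $p^{e}$ and, being a divisor of a prime power, equals $p^{a}$ for some $1\le a\le e$. For the cardinality I would filter $\mathcal{R}$ by the chain $\mathcal{R}=\langle\gamma^{0}\rangle\supseteq\langle\gamma\rangle\supseteq\cdots\supseteq\langle\gamma^{e}\rangle=\langle 0\rangle$ and verify that for each $0\le i\le e-1$ the map $\overline{\mathcal{R}}\to\langle\gamma^{i}\rangle/\langle\gamma^{i+1}\rangle$, $r+\langle\gamma\rangle\mapsto r\gamma^{i}+\langle\gamma^{i+1}\rangle$, is a well-defined $\overline{\mathcal{R}}$-module isomorphism; well-definedness and surjectivity are clear, and injectivity uses that the annihilator of $\gamma^{i}$ is exactly $\langle\gamma^{e-i}\rangle\subseteq\langle\gamma\rangle$, which is immediate from the chain structure of the ideals of $\mathcal{R}$. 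Multiplying the $e$ quotient sizes gives $|\mathcal{R}|=|\overline{\mathcal{R}}|^{e}=p^{me}$.

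The technical core is the lemma that the restriction of $^{-}\colon\mathcal{R}\to\overline{\mathcal{R}}$ to $\mathcal{T}$ is a bijection, equivalently that $\mathcal{T}\setminus\{0\}=\langle\zeta\rangle$ maps isomorphically onto $\overline{\mathcal{R}}^{*}$. I would show the intersection $\langle\zeta\rangle\cap(1+\langle\gamma\rangle)$ is trivial: $1+\langle\gamma\rangle$ is a subgroup of $\mathcal{R}^{*}$ (it is closed under multiplication, and each $1+a$ with $a$ nilpotent is invertible) of order $|\langle\gamma\rangle|=|\mathcal{R}|/|\overline{\mathcal{R}}|=p^{m(e-1)}$ by part (a), a power of $p$, while $|\langle\zeta\rangle|=p^{m}-1$ is coprime to $p$, so a common subgroup must be $\{1\}$. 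Hence the composite $\langle\zeta\rangle\hookrightarrow\mathcal{R}^{*}\to\overline{\mathcal{R}}^{*}$ is injective, and as both groups have $p^{m}-1$ elements it is an isomorphism; sending $0\mapsto 0$ extends this to the asserted bijection $\mathcal{T}\to\overline{\mathcal{R}}$. In particular, $\overline{r}\ne 0$ holds exactly when the unique Teichm\"{u}ller representative of $\overline{r}$ is non-zero.

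Part (b) then follows at once: $\mathcal{T}\setminus\{0\}$ is cyclic of order $p^{m}-1$, and since $\gcd(p^{s},p^{m}-1)=1$ the map $x\mapsto x^{p^{s}}$ is an injective, hence bijective, endomorphism of this group, so every non-zero $\theta\in\mathcal{T}$ has a $p^{s}$-th root $\theta_{0}\in\langle\zeta\rangle\subseteq\mathcal{T}$. For part (c), existence comes from iterated lifting: given $r\in\mathcal{R}$, choose $r_{0}\in\mathcal{T}$ with $\overline{r_{0}}=\overline{r}$ (by the lemma), so $r-r_{0}=\gamma r'$; applying the same step to $r'$ and iterating $e$ times yields $r=r_{0}+r_{1}\gamma+\cdots+r_{e-1}\gamma^{e-1}+\gamma^{e}r^{(e)}$, and $\gamma^{e}=0$ removes the tail. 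Uniqueness follows by counting: there are exactly $|\mathcal{T}|^{e}=(p^{m})^{e}=p^{me}=|\mathcal{R}|$ tuples in $\mathcal{T}^{e}$, and existence makes the evaluation map $\mathcal{T}^{e}\to\mathcal{R}$, $(r_{0},\dots,r_{e-1})\mapsto\sum_{i}r_{i}\gamma^{i}$, surjective, hence bijective. Finally, since $\mathcal{R}$ is local with maximal ideal $\langle\gamma\rangle$, an element is a unit iff it lies outside $\langle\gamma\rangle$ iff its residue is non-zero iff $r_{0}\ne 0$ by the lemma. I expect the bijectivity lemma of the preceding paragraph to be the only genuine obstacle; together with the cardinality count in (a), it reduces everything else to routine verification.
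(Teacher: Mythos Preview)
The paper does not supply a proof of this proposition; it is simply quoted from McDonald's \textit{Finite Rings with Identity} and used as a black box. Your argument is correct and self-contained: the filtration $\langle\gamma^{i}\rangle/\langle\gamma^{i+1}\rangle\cong\overline{\mathcal{R}}$ gives the cardinality in (a), the coprimality of $|\langle\zeta\rangle|=p^{m}-1$ and $|1+\langle\gamma\rangle|=p^{m(e-1)}$ yields the Teichm\"{u}ller bijection $\mathcal{T}\to\overline{\mathcal{R}}$, and parts (b) and (c) then follow by the bijectivity of $x\mapsto x^{p^{s}}$ on a cyclic group of order coprime to $p$ and by iterated lifting with a counting argument, respectively. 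Since the paper itself offers no proof, there is nothing to compare your approach against; you have simply supplied what the paper outsources to the reference.
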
 

 By Proposition \ref{teich}(c), we see that a unit $\lambda \in \mathcal{R}$ can be written as $\lambda= \alpha+\gamma \beta,$ where $\alpha (\neq 0) \in \mathcal{T}$ and $\beta \in \{0\}\cup (\mathcal{R}\setminus \langle \gamma^{e-1}\rangle).$ Let $\mathcal{C}$ be a $\lambda$-constacyclic code  of length $N$ over $\mathcal{R},$ (i.e., an ideal of the ring $\mathcal{R}_{\lambda}$). For $0 \leq i \leq e-1,$ the $i$th torsion code of $\mathcal{C}$ is defined as
  \begin{equation*}
 \text{Tor}_i(\mathcal{C})~=~\{\mu(f(x))\in \overline{\mathcal{R}}_{\lambda}  | \gamma^i f(x) \in \mathcal{C}  \}.
  \end{equation*}  
  \begin{thm}\cite{norton}\label{card} If $\mathcal{C}$ is a $\lambda$-constacyclic code of length $N$ over $\mathcal{R},$ then we have $|\mathcal{C}|=\prod\limits_{i=0}^{e-1} |\text{Tor}_i(\mathcal{C})|.$
  \end{thm}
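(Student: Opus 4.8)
The plan is to build $\mathcal{C}$ out of its torsion codes along the $\gamma$-adic filtration. Consider the descending chain of additive subgroups of $\mathcal{C}$
$$\mathcal{C}=\mathcal{C}\cap\mathcal{R}_{\lambda}\supseteq \mathcal{C}\cap\gamma\mathcal{R}_{\lambda}\supseteq\mathcal{C}\cap\gamma^{2}\mathcal{R}_{\lambda}\supseteq\cdots\supseteq\mathcal{C}\cap\gamma^{e-1}\mathcal{R}_{\lambda}\supseteq\mathcal{C}\cap\gamma^{e}\mathcal{R}_{\lambda}=\{0\},$$
the last equality holding because $\gamma^{e}=0$ in $\mathcal{R}.$ Telescoping the orders along this chain gives $|\mathcal{C}|=\prod_{i=0}^{e-1}\bigl(|\mathcal{C}\cap\gamma^{i}\mathcal{R}_{\lambda}|/|\mathcal{C}\cap\gamma^{i+1}\mathcal{R}_{\lambda}|\bigr),$ so it suffices to prove that the $i$th successive quotient has order $|\text{Tor}_{i}(\mathcal{C})|$ for each $i$ with $0\le i\le e-1.$

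For such an $i,$ I would define a map $\phi_{i}\colon\mathcal{C}\cap\gamma^{i}\mathcal{R}_{\lambda}\to\overline{\mathcal{R}}_{\lambda}$ by writing each element of $\mathcal{C}\cap\gamma^{i}\mathcal{R}_{\lambda}$ as $\gamma^{i}g(x)$ with $g(x)\in\mathcal{R}_{\lambda}$ and setting $\phi_{i}(\gamma^{i}g(x))=\mu(g(x)).$ The first thing to check is that $\phi_{i}$ is well defined: if $\gamma^{i}g(x)=\gamma^{i}h(x)$ in $\mathcal{R}_{\lambda},$ then $\gamma^{i}(g(x)-h(x))=0,$ and comparing the coefficients of the unique degree-$<N$ representative of $g(x)-h(x)$ shows that each such coefficient is annihilated by $\gamma^{i}$ and hence lies in $\gamma^{e-i}\mathcal{R}\subseteq\langle\gamma\rangle$ (this uses that $\mathcal{R}$ is a chain ring, so the annihilator of $\gamma^{i}$ in $\mathcal{R}$ is exactly $\gamma^{e-i}\mathcal{R}$); therefore $g(x)-h(x)\in\ker\mu$ and $\mu(g(x))=\mu(h(x)).$ Additivity of $\phi_{i}$ is immediate from that of $\mu.$

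Next I would compute the image and kernel of $\phi_{i}.$ By the definition of the torsion code, $\gamma^{i}g(x)\in\mathcal{C}$ gives $\mu(g(x))\in\text{Tor}_{i}(\mathcal{C}),$ and conversely every $\mu(f(x))\in\text{Tor}_{i}(\mathcal{C})$ equals $\phi_{i}(\gamma^{i}f(x))$ with $\gamma^{i}f(x)\in\mathcal{C}\cap\gamma^{i}\mathcal{R}_{\lambda};$ hence the image of $\phi_{i}$ is $\text{Tor}_{i}(\mathcal{C}).$ For the kernel, $\phi_{i}(\gamma^{i}g(x))=0$ if and only if $g(x)\in\ker\mu=\gamma\mathcal{R}_{\lambda},$ i.e.\ if and only if $\gamma^{i}g(x)\in\gamma^{i+1}\mathcal{R}_{\lambda},$ so $\ker\phi_{i}=\mathcal{C}\cap\gamma^{i+1}\mathcal{R}_{\lambda}.$ The first isomorphism theorem then yields $(\mathcal{C}\cap\gamma^{i}\mathcal{R}_{\lambda})/(\mathcal{C}\cap\gamma^{i+1}\mathcal{R}_{\lambda})\cong\text{Tor}_{i}(\mathcal{C})$ as groups, so the $i$th quotient has order $|\text{Tor}_{i}(\mathcal{C})|;$ substituting into the telescoping product above gives $|\mathcal{C}|=\prod_{i=0}^{e-1}|\text{Tor}_{i}(\mathcal{C})|,$ as desired.

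I expect the only genuinely delicate point to be the well-definedness of $\phi_{i},$ which relies on the exact description of the annihilator of $\gamma^{i}$ in the chain ring $\mathcal{R}$ together with the care needed to identify elements of $\mathcal{R}_{\lambda}$ with their unique polynomial representatives of degree less than $N$ before comparing coefficients; the remaining steps are routine bookkeeping with the first isomorphism theorem.
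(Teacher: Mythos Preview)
Your argument is correct: the $\gamma$-adic filtration of $\mathcal{C}$ together with the group homomorphisms $\phi_i$ you describe gives exactly the successive quotients $\text{Tor}_i(\mathcal{C})$, and the telescoping product yields the stated formula. The well-definedness step is handled properly via the annihilator $\text{Ann}_{\mathcal{R}}(\gamma^i)=\gamma^{e-i}\mathcal{R}$, and the image and kernel computations are sound.

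Note, however, that the paper does not supply its own proof of this statement; it simply cites it from Norton and S\u{a}l\u{a}gean \cite{norton}. Your filtration-and-quotient argument is the standard one underlying that reference, so there is nothing to compare against in the present paper itself.
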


From this point on, let $ \lambda=\alpha+\gamma \beta,$ where $\alpha(\neq 0) \in \mathcal{T}$ and $\beta $ is  a unit in $\mathcal{R}.$ To determine all $\lambda$-constacyclic codes of length $np^s$ over $\mathcal{R}$ and their Torsion codes,  we see, by Proposition \ref{teich}(b), that there exists $\alpha_0 \in \mathcal{T}$ satisfying $\alpha_0^{p^s}=\alpha.$ Further, as  $\gcd(n,p)=1,$ by Theorem 2.7 of  Norton and S$\breve{a}$l$\breve{a}$gean \cite{norton}, we can write $x^n-\alpha_0=f_1(x)f_2(x)\cdots f_r(x),$ where  $f_1(x),f_2(x),\cdots, f_r(x)$ are monic basic irreducible pairwise coprime polynomials in $\mathcal{R}[x].$
In the following theorem, we determine all $\lambda$-constacyclic codes of length $np^s$ over $\mathcal{R}$ and their Torsion codes.
  \begin{thm}  \label{Tstruc} Let $\mathcal{C}$ be a $\lambda$-constacyclic code of length $np^s$ over $\mathcal{R},$ (i.e., an ideal of the ring $\mathcal{R}_{\lambda}$). Then we have the following: 
\begin{enumerate}  \item[(a)] $\mathcal{C}=\langle \prod\limits_{\ell=1}^{r} f_\ell(x)^{k_\ell} \rangle$ in $\mathcal{R}_\lambda,$ where $0 \leq k_\ell \leq ep^s$ for $1 \leq \ell \leq r.$ 

  \item[(b)]  For $0 \leq i \leq e-1,$ we have    \begin{equation*} \text{Tor}_i(\mathcal{C})~=~\bigg\langle\prod\limits_{\ell=1}^{r}\overline{ f_\ell(x)}^{\tau_\ell(i)}\bigg\rangle \text{~~~in~~~} \overline{\mathcal{R}}_\lambda, \end{equation*} where $\tau_\ell(i)=\min\{(i+1)p^s, k_\ell\}-\min\{ i p^s, k_\ell\}$ for $ 1 \leq \ell \leq r.$ 
 \end{enumerate}\end{thm}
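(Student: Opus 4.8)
The plan is to prove both parts at once by first pinning down the ring‑theoretic structure of $\mathcal{R}_{\lambda}=\mathcal{R}[x]/\langle x^{np^s}-\lambda\rangle$: I will show that it decomposes as a direct product of finite chain rings in which each $f_\ell(x)$ becomes (a unit times) a uniformizer of exactly one factor and a unit in the others, and in which $\gamma$ becomes (a unit times) the $p^s$‑th power of that uniformizer. Granting this, parts (a) and (b) reduce to routine bookkeeping with ideals and ideal quotients of chain rings.

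First I would analyze the residue ring. Since $\overline{\lambda}=\overline{\alpha}=\overline{\alpha_0}^{\,p^s}$ and $\overline{\mathcal{R}}$ has characteristic $p$, we get $x^{np^s}-\overline{\lambda}=(x^n-\overline{\alpha_0})^{p^s}=\prod_{\ell=1}^{r}\overline{f_\ell(x)}^{\,p^s}$ with pairwise coprime factors, so by CRT $\overline{\mathcal{R}}_{\lambda}\cong\prod_{\ell=1}^{r}\overline{\mathcal{R}}[x]/\langle\overline{f_\ell(x)}^{\,p^s}\rangle$, each factor a finite chain ring whose maximal ideal $\langle\overline{f_\ell(x)}\rangle$ is nilpotent of index $p^s$. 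Because $\gamma$ is nilpotent in $\mathcal{R}_{\lambda}$, the kernel $\gamma\mathcal{R}_{\lambda}$ of $\mu$ is a nilpotent ideal, so the complete system of orthogonal idempotents of $\overline{\mathcal{R}}_{\lambda}$ lifts to one in $\mathcal{R}_{\lambda}$; writing $\mathcal{S}_\ell$ for the corresponding summand, $\mathcal{R}_{\lambda}\cong\prod_{\ell=1}^{r}\mathcal{S}_\ell$ with $\mathcal{S}_\ell/\gamma\mathcal{S}_\ell\cong\overline{\mathcal{R}}[x]/\langle\overline{f_\ell(x)}^{\,p^s}\rangle$, so each $\mathcal{S}_\ell$ is local with finite residue field. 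The key elementary identity is that in $\mathcal{R}_{\lambda}$
\[
\gamma\beta=\lambda-\alpha=x^{np^s}-\alpha_0^{\,p^s}=(x^n-\alpha_0)\Big(\sum_{j=0}^{p^s-1}\alpha_0^{\,j}\,x^{n(p^s-1-j)}\Big),
\]
so $\gamma=\gamma\beta\beta^{-1}\in\langle x^n-\alpha_0\rangle=\langle\prod_{\ell}f_\ell(x)\rangle$. Passing to $\mathcal{S}_\ell$, where $f_{\ell'}(x)$ is a unit for each $\ell'\neq\ell$ (its reduction mod $\gamma$ is invertible mod $\overline{f_\ell(x)}$ by coprimality), this yields $\gamma\in\langle f_\ell(x)\rangle$ in $\mathcal{S}_\ell$; hence the maximal ideal of $\mathcal{S}_\ell$ equals the principal ideal $\langle f_\ell(x)\rangle$, i.e.\ $\mathcal{S}_\ell$ is a finite chain ring with uniformizer $f_\ell(x)$. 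A cardinality count then fixes the quantitative data: $\prod_\ell|\mathcal{S}_\ell|=|\mathcal{R}_{\lambda}|=|\mathcal{R}|^{np^s}$ forces the nilpotency index of $f_\ell(x)$ in $\mathcal{S}_\ell$ to be exactly $ep^s$ for every $\ell$ and forces $\gamma$ to be nonzero in each $\mathcal{S}_\ell$, whence comparison with $\mathcal{S}_\ell/\gamma\mathcal{S}_\ell\cong\overline{\mathcal{R}}[x]/\langle\overline{f_\ell(x)}^{\,p^s}\rangle$ gives $\gamma\mathcal{S}_\ell=\langle f_\ell(x)^{p^s}\rangle$.

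With this in hand, part (a) is immediate: every ideal $\mathcal{C}$ of $\mathcal{R}_{\lambda}\cong\prod_\ell\mathcal{S}_\ell$ is a product $\prod_\ell\langle f_\ell(x)^{k_\ell}\rangle_{\mathcal{S}_\ell}$ with $0\le k_\ell\le ep^s$, and $\prod_\ell f_\ell(x)^{k_\ell}$ generates $\langle f_\ell(x)^{k_\ell}\rangle$ in $\mathcal{S}_\ell$ (the other factors being units) and the zero ideal elsewhere, so $\mathcal{C}=\langle\prod_{\ell=1}^{r}f_\ell(x)^{k_\ell}\rangle$ in $\mathcal{R}_{\lambda}$. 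For part (b), I would compute the ideal quotient $(\mathcal{C}:\gamma^i)=\{h(x):\gamma^i h(x)\in\mathcal{C}\}$ componentwise: since $\gamma^i\mathcal{S}_\ell=\langle f_\ell(x)^{ip^s}\rangle$ and $ip^s\le(e-1)p^s<ep^s$, the standard chain‑ring computation gives $(\mathcal{C}:\gamma^i)=\langle\prod_\ell f_\ell(x)^{\max\{0,\,k_\ell-ip^s\}}\rangle$. Applying the surjection $\mu$ and using that in $\overline{\mathcal{R}}_{\lambda}$ the $\ell$‑th component has $\overline{f_\ell(x)}^{\,p^s}=0$, so $\langle\overline{f_\ell(x)}^{\,a}\rangle=\langle\overline{f_\ell(x)}^{\,\min\{a,p^s\}}\rangle$, gives $\text{Tor}_i(\mathcal{C})=\langle\prod_\ell\overline{f_\ell(x)}^{\,\min\{p^s,\max\{0,k_\ell-ip^s\}\}}\rangle$; it then remains only to verify the arithmetic identity $\min\{p^s,\max\{0,k_\ell-ip^s\}\}=\min\{(i+1)p^s,k_\ell\}-\min\{ip^s,k_\ell\}=\tau_\ell(i)$ by a short case analysis on the position of $k_\ell$ relative to $ip^s$ and $(i+1)p^s$.

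The hard part will be the structural step: showing that $\mathcal{R}_{\lambda}$ genuinely is a product of chain rings with each $f_\ell(x)$ a bona fide uniformizer of its factor, with $\gamma$ a unit times the $p^s$‑th power of that uniformizer, and with nilpotency index exactly $ep^s$. The idempotent lifting and the divisibility $\gamma\in\langle x^n-\alpha_0\rangle$ are its core, together with the cardinality count that rigidly fixes the nilpotency indices and excludes the degenerate possibility of $\gamma$ vanishing in some component. Everything after that — reading off the ideals, handling $(\mathcal{C}:\gamma^i)$, and the combinatorial identity for $\tau_\ell(i)$ — is careful but routine.
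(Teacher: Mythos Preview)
Your proposal is correct and follows essentially the same Chinese Remainder Theorem approach that the paper invokes: you decompose $\mathcal{R}_{\lambda}$ as a product of finite chain rings $\mathcal{S}_\ell$ with uniformizer $f_\ell(x)$ and $\gamma\mathcal{S}_\ell=\langle f_\ell(x)^{p^s}\rangle$, and then read off the ideals and torsion codes componentwise. The paper simply defers parts (a) and (b) to \cite[Th.~3.1]{sharma1} and \cite[Th.~3.5]{zhu} respectively, whereas you supply a unified, self-contained argument; the key divisibility $\gamma\beta=x^{np^s}-\alpha_0^{p^s}=(x^n-\alpha_0)\sum_{j}\alpha_0^{\,j}x^{n(p^s-1-j)}$, the idempotent lifting through the nil ideal $\gamma\mathcal{R}_{\lambda}$, and the cardinality count forcing each nilpotency index to be exactly $ep^s$ are all sound, and the final arithmetic identity for $\tau_\ell(i)$ is verified by the indicated case split.
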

 
 \begin{proof} \begin{enumerate}   \item[(a)] Working in a similar manner as  in Theorem 3.1 of Sharma and Sidana  \cite{sharma1} and by applying  the Chinese Remainder Theorem, the desired result follows.
  \item[(b)] Working in a similar manner as in Theorem 3.5 of \cite{zhu}, the desired result follows.\end{enumerate}   \end{proof} 

 Note that each non-zero element $c(x)\in \mathcal{R}_\lambda$ can be expressed as $c(x)=\gamma^\ell A(x),$ where $0 \leq \ell \leq e-1$ and $A(x)  \in \mathcal{R}_{\lambda}$ satisfies $\mu(A(x)) \neq 0.$ In the following lemma, we relate the depth of $c(x)$ with the depth of $\mu(A(x)).$
  \begin{lem}\label{l2} Let $c(x)$ be a non-zero element of $\mathcal{R}_\lambda.$ Let us write $c(x)=\gamma^\ell A(x),$ where $0 \leq \ell \leq e-1$ and $\mu(A(x)) \neq 0.$ Then the following hold.
\begin{enumerate} \item[(a)] We have $\text{depth}(c(x)) \geq \text{depth} (\mu(A(x))).$ \item[(b)] When $\ell=e-1,$ we have $\text{depth}(c(x)) = \text{depth} (\mu(A(x))).$ \end{enumerate}
\end{lem}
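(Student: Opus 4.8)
The plan is to work entirely with coefficient vectors and to exploit the fact that the derivative operator $D$ is defined coordinate-wise, so that it commutes both with multiplication by $\gamma^{\ell}$ and with reduction modulo $\gamma$. Write $N=np^{s}$, and let $a=(a_{0},a_{1},\dots,a_{N-1})\in\mathcal{R}^{N}$ be the coefficient vector of the representative of $A(x)$ of degree less than $N$. Then $\gamma^{\ell}a$ is the coefficient vector of $c(x)=\gamma^{\ell}A(x)$, while the coordinate-wise reduction $\overline{a}\in\overline{\mathcal{R}}^{N}$ is the coefficient vector of $\mu(A(x))$. Since $D$ is $\mathcal{R}$-linear on tuples and reduction modulo $\gamma$ is a ring homomorphism applied coordinate-wise, for every $i\ge 0$ one has $D^{i}(\gamma^{\ell}a)=\gamma^{\ell}D^{i}(a)$ and $\overline{D^{i}(a)}=D^{i}(\overline{a})$; once the first derivative is taken there is no further reduction modulo $x^{N}-\lambda$, so these identities need no extra care. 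These two identities, together with the chain structure of $\mathcal{R}$, are all I will use.

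For part (a), set $t=\text{depth}(\mu(A(x)))\ (\ge 1)$. Then $D^{t-1}(\mu(A(x)))=D^{t-1}(\overline{a})=\overline{D^{t-1}(a)}$ is a non-zero tuple over $\overline{\mathcal{R}}$, so some coordinate $b$ of $D^{t-1}(a)$ satisfies $b\notin\langle\gamma\rangle$; as $\mathcal{R}$ is a chain ring, $b$ is a unit. Since $0\le\ell\le e-1$ we have $\gamma^{\ell}\neq 0$, hence $\gamma^{\ell}b\neq 0$ (otherwise $\gamma^{\ell}=\gamma^{\ell}b\,b^{-1}=0$). Therefore $D^{t-1}(c(x))=\gamma^{\ell}D^{t-1}(a)$ is non-zero, which gives $\text{depth}(c(x))\ge t=\text{depth}(\mu(A(x)))$.

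For part (b), assume $\ell=e-1$; by part (a) it suffices to prove $\text{depth}(c(x))\le t$, where $t=\text{depth}(\mu(A(x)))$. If $t=N$ this is automatic, since the depth of any vector in $\mathcal{R}^{N}$ is at most $N$. If $t\le N-1$, then $D^{t}(\mu(A(x)))=D^{t}(\overline{a})=\overline{D^{t}(a)}$ is the zero tuple, so every coordinate of $D^{t}(a)$ lies in $\langle\gamma\rangle$, i.e.\ $D^{t}(a)\in\gamma\,\mathcal{R}^{N-t}$. Hence $D^{t}(c(x))=\gamma^{e-1}D^{t}(a)\in\gamma^{e}\mathcal{R}^{N-t}=\{0\}$ because $\gamma^{e}=0$, so $\text{depth}(c(x))\le t$; combined with part (a) this yields the equality.

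I do not expect a genuine obstacle here. The points needing attention are minor: that $D$ and reduction modulo $\gamma$ commute (immediate, since the derivative of an element of $\mathcal{R}_{\lambda}$ is just the derivative of its coefficient vector and all later derivatives act on shrinking tuples with no $\lambda$ present); that the degenerate case $t=N$ in part (b) must be handled separately; and that the chain condition on $\mathcal{R}$ is precisely what powers the two directions, through "$\gamma^{\ell}\cdot(\text{unit})\neq 0$ for $\ell<e$" in part (a) and "$\gamma^{e-1}\cdot\gamma=0$" in part (b). Conceptually, applying $\mu$ to $\gamma^{\ell}A(x)$ strips one factor of $\gamma$, and such a strip can only shorten the derivative chain of the vector of $A(x)$ (part (a)); when $\ell=e-1$ there is no slack left for it to shorten, which forces equality (part (b)).
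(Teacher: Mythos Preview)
Your proof is correct and follows essentially the same approach as the paper's: both rely on the commutation of $D$ with scalar multiplication by $\gamma^{\ell}$ and with reduction modulo $\gamma$. The only cosmetic difference is that for part~(a) the paper argues by assuming $\text{depth}(c(x))=t<N$ and deducing $D^{t}(\mu(A(x)))=0$, whereas you argue in the other direction by taking $t=\text{depth}(\mu(A(x)))$ and showing $D^{t-1}(c(x))\neq 0$; these are contrapositives of one another and use the same facts about the chain ring.
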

\begin{proof} 
(a) When $\text{depth}(c(x))=N,$ the result holds trivially.  Now we assume that $\text{depth}(c(x))=t  < N.$ This implies that $\gamma^\ell  D^{t}(A(x))=D^{t}(\gamma^\ell A(x))=D^{t}(c(x))=(0,0,\cdots,0)\in \mathcal{R}^{N-t},$ which further implies that $D^{t}( A(x)) \in \langle \gamma^{e-\ell}\rangle^{N-t}.$ From this, we obtain $D^{t}(\mu( A(x)))=\mu(D^{t}( A(x)))=(0,0,\cdots,0) \in \overline{\mathcal{R}}^{N-t}.$ This shows that $\text{depth}(\mu(A(x))) \leq t=\text{depth}(c(x)).$  

(b) If $\text{depth} (\mu(A(x)))=N,$ then by part (a), we get $N\geq \text{depth}(c(x)) \geq \text{depth} (\mu(A(x)))=N,$ which gives $\text{depth}(c(x))=N= \text{depth} (\mu(A(x))).$  Now we assume that $\text{depth} (\mu(A(x)))=k < N.$ This gives $\mu(D^k(A(x)))=D^k(\mu(A(x)))=(0,0,\cdots,0) \in \overline{\mathcal{R}}^{N-k},$ which implies that $D^k(A(x))\in \langle \gamma\rangle^{N-k}.$ This further implies that $D^k(c(x))=D^k(\gamma^{e-1} A(x))=\gamma^{e-1}D^k(A(x))=(0,0,\cdots,0) \in \mathcal{R}^{N-k}.$ This shows that $\text{depth}(c(x)) \leq k=\text{depth} (\mu(A(x))).$ From this and  by part (a), we get the desired result.
  \end{proof}

From now on, we will follow the same notations as in Section \ref{prelim}.

  \section{Determination of depth spectra of $\lambda$-constacyclic codes of length $np^s$ over $\mathcal{R}$}\label{sec4}  
In this section, we shall  determine depth spectra of all $\lambda$-constacyclic codes of length $np^s$ over $\mathcal{R}.$  Towards this,  we first prove the following lemma.
  \begin{lem} \label{fac1} Let $A(x) \in \mathcal{R}[x]$ be such that $\mu(A(x)) \neq 0.$ If there exists an integer $t$  satisfying $0 \leq t \leq e-1$ such that the polynomial $x^{np^s}-\lambda$ divides $\gamma^t A(x)$ in $\mathcal{R}[x], $ then $x^{np^s}- \overline{\lambda}$ divides $\mu(A(x))$ in $\overline{\mathcal{R}}[x].$
  \end{lem}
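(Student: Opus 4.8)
The plan is to reduce the given divisibility over $\mathcal{R}[x]$ to a divisibility over $\overline{\mathcal{R}}[x]$ by performing polynomial division by the \emph{monic} polynomial $x^{np^s}-\lambda$ and then applying the coefficient-wise reduction map $\mu$.

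First I would unpack the hypothesis: there is a polynomial $B(x)\in\mathcal{R}[x]$ with $\gamma^t A(x)=(x^{np^s}-\lambda)B(x)$. Since the leading coefficient of $x^{np^s}-\lambda$ is a unit, the division algorithm is available in $\mathcal{R}[x]$ even though $\mathcal{R}$ need not be an integral domain, so I can write $A(x)=(x^{np^s}-\lambda)Q(x)+R(x)$ with $Q(x),R(x)\in\mathcal{R}[x]$, where $R(x)=0$ or $\deg R(x)<np^s$. Multiplying this identity by $\gamma^t$ and subtracting it from the hypothesis identity gives
$$(x^{np^s}-\lambda)\bigl(B(x)-\gamma^t Q(x)\bigr)~=~\gamma^t R(x).$$
Here the right-hand side is either $0$ or has degree $<np^s$, whereas for any nonzero $P(x)\in\mathcal{R}[x]$ the product $(x^{np^s}-\lambda)P(x)$ has degree exactly $np^s+\deg P(x)\ge np^s$, again because $x^{np^s}-\lambda$ is monic. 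Hence both sides must be zero; in particular $\gamma^t R(x)=0$.

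Next I would deduce that $\mu(R(x))=0$. Indeed, if some coefficient $r$ of $R(x)$ were a unit, then $\gamma^t r$ would be an associate of $\gamma^t$, which is nonzero because $t\le e-1<e$ and $e$ is the nilpotency index of $\gamma$; this contradicts $\gamma^t R(x)=0$. So every coefficient of $R(x)$ lies in the maximal ideal $\langle\gamma\rangle$, whence $\mu(R(x))=0$. Applying $\mu$ to the identity $A(x)=(x^{np^s}-\lambda)Q(x)+R(x)$ and using $\overline{\lambda}=\overline{\alpha}$, we obtain $\mu(A(x))=(x^{np^s}-\overline{\lambda})\,\mu(Q(x))$, which shows that $x^{np^s}-\overline{\lambda}$ divides $\mu(A(x))$ in $\overline{\mathcal{R}}[x]$, as required.

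The only step needing genuine care is that $\mathcal{R}$ is not a field, so the division algorithm and the degree comparison must be justified purely from the fact that $x^{np^s}-\lambda$ is monic; the remainder of the argument is routine bookkeeping about annihilators in a finite chain ring. (Incidentally, the hypothesis $\mu(A(x))\neq 0$ plays no role in this conclusion and is presumably recorded only for use in the results that invoke the lemma.)
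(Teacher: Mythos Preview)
Your proof is correct and actually cleaner than the paper's. The paper proceeds differently: from $\gamma^t A(x)=(x^{np^s}-\lambda)B(x)$ it first argues (via an implicit leading-coefficient induction, using that $x^{np^s}-\lambda$ is monic) that every coefficient of $B(x)$ lies in $\langle\gamma^t\rangle$, writes $B(x)=\gamma^t V(x)$, and then multiplies through by $\gamma^{e-1-t}$. At that point it invokes the specific shape $\lambda=\alpha+\gamma\beta$ and a binomial expansion, together with Kummer's theorem showing $p\mid\binom{p^s}{k}$ for $1\le k\le p^s-1$, to obtain $\gamma^{e-1}(x^{np^s}-\lambda)=\gamma^{e-1}(x^n-\alpha_0)^{p^s}$; reducing modulo $\gamma$ then yields $\mu(A(x))=(x^{np^s}-\overline{\lambda})\mu(V(x))$.

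Your route avoids all of that machinery: the division algorithm by a monic polynomial plus the degree comparison forces $\gamma^t R(x)=0$, and the annihilator description in a chain ring finishes the job. This argument makes no use of the particular form of $\lambda$ (or even of $x^{np^s}-\lambda$ beyond being monic), so it is both shorter and more general. The paper's approach does have the incidental benefit of identifying the cofactor explicitly as $\mu(V(x))$ with $V(x)$ coming directly from $B(x)$, and it records the fact $\mu(V(x))\neq 0$ (from $\mu(A(x))\neq 0$), but neither point is needed for the lemma as stated; your remark that the hypothesis $\mu(A(x))\neq 0$ is unused in the conclusion is well taken.
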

\begin{proof} As $x^{np^s}-\lambda$ divides $\gamma^t A(x)$ in $\mathcal{R}[x],$ we can write $\gamma^t A(x)=(x^{np^s}-\lambda )B(x),$ where $B(x) \in \mathcal{R}[x].$ Now we observe that all the  coefficients of the polynomial  $B(x)$ lie in  $\langle \gamma^t\rangle.$ So we can  write $B(x)=\gamma^t V(x),$ where $V(x) \in \mathcal{R}[x].$ This gives $\gamma^t A(x)=\gamma^t(x^{np^s}-\lambda ) V(x),$ which implies that $\gamma^{e-1} A(x)=\gamma^{e-1}(x^{np^s}-\lambda ) V(x).$ As $\mu(A(x))\neq 0,$ we have $\mu(V(x))\neq0.$  Next we see that   \begin{eqnarray*}\gamma^{e-1}(x^{np^s}-\lambda)&=&\gamma^{e-1}\big((x^{n}-\alpha_0+\alpha_0)^{p^s}-\alpha_0^{p^s}-\gamma\beta\big)\\&=&\gamma^{e-1}(x^n-\alpha_0)^{p^s} +\gamma^{e-1}\sum\limits_{k=1}^{p^s-1} \binom{p^s}{k}(x^n-\alpha_0)^{k}{\alpha_0}^{p^s-k}.  \end{eqnarray*}
Further, for $1\leq k \leq p^s-1,$ by applying Kummer's Theorem, we note that $p$ divides $\binom{p^s}{k},$ which implies that $ \binom{p^s}{k}\in\langle\gamma\rangle.$ From this, we obtain $\gamma^{e-1} A(x)=\gamma^{e-1}(x^{n}-\alpha_0)^{p^s}V(x),$ which gives $\gamma^{e-1} \big(A(x)-(x^{n}-\alpha_0)^{p^s}V(x)\big)=0.$ From this, it follows that    $$\mu(A(x))~=~(x^{n}-\overline{\alpha}_0)^{p^s}\mu(V(x))~=~(x^{np^s}-\overline{\lambda})\mu(V(x))\text{~~~in~~~}\overline{\mathcal{R}}[x],  $$ which proves the lemma.
  \end{proof}

Next  by Theorem \ref{Tstruc}(a), we recall that a $\lambda$-constacyclic code $\mathcal{C}$ of length $np^s$ over $\mathcal{R}$ is generated by $ \prod\limits_{\ell=1}^{r} f_\ell(x)^{k_\ell},$  where $0 \leq k_\ell \leq ep^s$ for $1 \leq \ell \leq r.$ Further, for $0 \leq i \leq e-1,$ by Theorem \ref{Tstruc}(b), we note that $\text{Tor}_i(\mathcal{C})=\langle \prod\limits_{\ell=1}^{r}\overline{ f_\ell(x)}^{\tau_\ell(i)} \rangle,$ where $\tau_\ell(i)=\min\{(i+1)p^s, k_\ell\}-\min\{ i p^s, k_\ell\}$ for $ 1 \leq \ell \leq r.$ We also recall that $\text{deg }f_{\ell}(x)=d_{\ell}$ for $1 \leq \ell \leq r.$ Let us define
  $$\mathcal{S}_{1}(\mathcal{C})~=~\sum \limits_{\ell=1}^{r}d_\ell \tau_\ell(e-1)\text{~~ and~~ }\mathcal{S}_{2}(\mathcal{C})~=~\sum \limits_{\ell=2}^{r}d_\ell \tau_\ell(e-1).$$

Now we shall distinguish the following two cases: (i)  $\overline{\lambda} \neq 1$ and (ii) $\overline{\lambda} =1.$

In the following theorem, we consider the case $\overline{\lambda} \neq 1,$ and we determine depth spectra of all $\lambda$-constacyclic codes of length $np^s$ over $\mathcal{R}.$ 
  \begin{thm}\label{Tspec} Let $\mathcal{C} =\langle \prod\limits_{\ell=1}^{r} f_\ell(x)^{k_\ell} \rangle$ be a non-trivial $\lambda$-constacyclic code of length $np^s$ over $\mathcal{R}$ with the $i$th torsion code as $\text{Tor}_i(\mathcal{C})=\langle\prod\limits_{\ell=1}^{r}\overline{ f_\ell(x)}^{\tau_\ell(i)}\rangle$ for $0 \leq i \leq e-1,$ where $0 \leq k_\ell \leq ep^s$ and $\tau_\ell(i)=\min\{(i+1)p^s, k_\ell\}-\min\{ i p^s, k_\ell\}$ for each $i$ and $\ell.$  
When $\overline{\lambda} \neq 1,$  the depth spectrum of the code $\mathcal{C}$ is given by 
  $$ \text{Depth}(\mathcal{C})~=~\{\mathcal{S}_{1}(\mathcal{C}) +1,\mathcal{S}_{1}(\mathcal{C})+2,\cdots, np^s\}.  $$
\end{thm}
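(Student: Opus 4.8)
The goal is to pin down $\text{Depth}(\mathcal{C})$ when $\overline{\lambda}\neq 1$, where $\mathcal{C}=\langle \prod_{\ell=1}^{r}f_\ell(x)^{k_\ell}\rangle$ in $\mathcal{R}_\lambda$. The plan is to reduce the computation of depths of codewords of $\mathcal{C}$ to the computation of depths of codewords of the top torsion code $\text{Tor}_{e-1}(\mathcal{C})$, which is a constacyclic code over the finite field $\overline{\mathcal{R}}$ with $\overline{\lambda}\neq 1$, and then invoke Theorem \ref{p3}. First I would observe that by Theorem \ref{p3}, since $\text{Tor}_{e-1}(\mathcal{C})=\langle\prod_{\ell=1}^{r}\overline{f_\ell(x)}^{\tau_\ell(e-1)}\rangle$ is an $\overline{\lambda}$-constacyclic code over $\overline{\mathcal{R}}$ whose generator polynomial has degree $\sum_{\ell=1}^{r}d_\ell\tau_\ell(e-1)=\mathcal{S}_1(\mathcal{C})$, we have $\text{Depth}(\text{Tor}_{e-1}(\mathcal{C}))=\{\mathcal{S}_1(\mathcal{C})+1,\mathcal{S}_1(\mathcal{C})+2,\dots,np^s\}$ (one must first check $\text{Tor}_{e-1}(\mathcal{C})$ is non-trivial; since $\mathcal{C}$ is non-trivial and $\overline{\lambda}\neq 1$, the degree $\mathcal{S}_1(\mathcal{C})$ is strictly less than $np^s$, and it is nonzero because $\mathcal{C}\neq\mathcal{R}_\lambda$ forces at least one $k_\ell\geq 1$; a short case-check handles the boundary). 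So it suffices to prove $\text{Depth}(\mathcal{C})=\text{Depth}(\text{Tor}_{e-1}(\mathcal{C}))$.

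\textbf{Containment $\text{Depth}(\text{Tor}_{e-1}(\mathcal{C}))\subseteq\text{Depth}(\mathcal{C})$.} Take any $\overline{A}(x)\in\text{Tor}_{e-1}(\mathcal{C})$ with $\text{depth}(\overline{A}(x))=\rho$. Lift $\overline{A}(x)$ to $A(x)\in\mathcal{R}[x]$ of degree $<np^s$; then $\gamma^{e-1}A(x)\in\mathcal{C}$ by definition of the torsion code, and if $\overline{A}(x)\neq 0$ then $\mu(A(x))\neq 0$, so $c(x):=\gamma^{e-1}A(x)$ is a nonzero codeword of $\mathcal{C}$ with $c(x)=\gamma^{e-1}A(x)$ in the normal form of Lemma \ref{l2}. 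By Lemma \ref{l2}(b), $\text{depth}(c(x))=\text{depth}(\mu(A(x)))=\text{depth}(\overline{A}(x))=\rho$. Hence every depth realized in $\text{Tor}_{e-1}(\mathcal{C})$ is realized in $\mathcal{C}$.

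\textbf{Containment $\text{Depth}(\mathcal{C})\subseteq\text{Depth}(\text{Tor}_{e-1}(\mathcal{C}))$.} Take a nonzero $c(x)\in\mathcal{C}$ and write $c(x)=\gamma^\ell A(x)$ with $0\le\ell\le e-1$ and $\mu(A(x))\neq 0$, as in Lemma \ref{l2}. The key point is that $\mu(A(x))$ lies in $\text{Tor}_\ell(\mathcal{C})$, and one needs to relate $\text{depth}(c(x))$ to a depth realized in $\text{Tor}_{e-1}(\mathcal{C})$. Here I would argue as follows: since $c(x)=\gamma^\ell A(x)\in\mathcal{C}$, in $\mathcal{R}[x]$ we have $x^{np^s}-\lambda \mid \prod f_\ell(x)^{k_\ell}\cdot(\text{something})$... more directly, $\mu(A(x))\in\text{Tor}_\ell(\mathcal{C})=\langle\prod_{\ell'}\overline{f_{\ell'}}^{\tau_{\ell'}(\ell)}\rangle$, and since $\tau_{\ell'}(\ell)\le\tau_{\ell'}(e-1)+(\text{nonneg})$... actually the cleanest route: $\text{Tor}_0(\mathcal{C})\supseteq\text{Tor}_1(\mathcal{C})\supseteq\cdots\supseteq\text{Tor}_{e-1}(\mathcal{C})$ is generally \emph{false}, so instead I would use Lemma \ref{l2}(a) to get $\text{depth}(c(x))\ge\text{depth}(\mu(A(x)))$ and then show that $\text{depth}(\mu(A(x)))$, together with an upper bound argument, forces $\text{depth}(c(x))$ into the stated set. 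Concretely: the generator polynomial of $\text{Tor}_\ell(\mathcal{C})$ has degree $\sum_{\ell'}d_{\ell'}\tau_{\ell'}(\ell)\ge$ (using that the $\tau_{\ell'}(i)$ partition $k_{\ell'}$ and monotonicity considerations) — the needed inequality is that this degree is at least $\mathcal{S}_1(\mathcal{C})$ only when... this is exactly the delicate combinatorial step. I expect the main obstacle to be controlling codewords of the form $\gamma^\ell A(x)$ with $\ell<e-1$: one must show their depths still land in $\{\mathcal{S}_1(\mathcal{C})+1,\dots,np^s\}$, which reduces to proving $\deg\!\big(\gcd(\mu(A(x)),x^{np^s}-\overline\lambda)\big)\le\mathcal{S}_1(\mathcal{C})$ for such $A(x)$ — equivalently that the generator degree of $\text{Tor}_\ell(\mathcal{C})$ is $\le\mathcal{S}_1(\mathcal{C})$ — combined with Lemma \ref{fac1} to rule out the depth dropping below $\mathcal{S}_1(\mathcal{C})+1$, and the finite-field result Theorem \ref{p3} applied inside $\text{Tor}_\ell(\mathcal{C})$ to get an exact description. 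Assembling these, $\bigcup_{\ell=0}^{e-1}\text{Depth}(\text{Tor}_\ell(\mathcal{C}))\subseteq\{\mathcal{S}_1(\mathcal{C})+1,\dots,np^s\}$ and equality follows from the previous containment, completing the proof.
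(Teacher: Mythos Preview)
Your overall plan and the first containment are correct and match the paper's approach: realize every depth in $\text{Depth}(\text{Tor}_{e-1}(\mathcal{C}))$ by a codeword of the form $\gamma^{e-1}A(x)$ via Lemma~\ref{l2}(b), and compute $\text{Depth}(\text{Tor}_{e-1}(\mathcal{C}))$ from Theorem~\ref{p3}. (A small side remark: your parenthetical that ``$\mathcal{C}\neq\mathcal{R}_\lambda$ forces $\mathcal{S}_1(\mathcal{C})>0$'' is not quite right---$\mathcal{S}_1(\mathcal{C})$ can be $0$ when all $k_\ell\le (e-1)p^s$---but the theorem statement and the argument still go through in that case.)

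The genuine gap is in your second containment. You correctly reduce to showing $\text{depth}(c(x))\ge \mathcal{S}_1(\mathcal{C})+1$ for every nonzero $c(x)=\gamma^\ell A(x)\in\mathcal{C}$ with $\mu(A(x))\neq 0$, and you correctly invoke Lemma~\ref{l2}(a) to get $\text{depth}(c(x))\ge\text{depth}(\mu(A(x)))$ with $\mu(A(x))\in\text{Tor}_\ell(\mathcal{C})$. But then you go astray: you dismiss the chain $\text{Tor}_0(\mathcal{C})\supseteq\cdots\supseteq\text{Tor}_{e-1}(\mathcal{C})$ as false (it is), and then try to prove ``the generator degree of $\text{Tor}_\ell(\mathcal{C})$ is $\le\mathcal{S}_1(\mathcal{C})$''. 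That inequality points the wrong way---if it held, Theorem~\ref{p3} would only give $\text{depth}(\mu(A(x)))\ge(\text{something}\le\mathcal{S}_1(\mathcal{C}))+1$, which is useless for the lower bound you need. The missing observation is that the torsion chain holds in the \emph{opposite} direction:
\[
\text{Tor}_0(\mathcal{C})\ \subseteq\ \text{Tor}_1(\mathcal{C})\ \subseteq\ \cdots\ \subseteq\ \text{Tor}_{e-1}(\mathcal{C}),
\]
simply because $\gamma^i f(x)\in\mathcal{C}$ implies $\gamma^{i+1}f(x)\in\mathcal{C}$; equivalently, $\tau_{\ell'}(i)\ge\tau_{\ell'}(e-1)$ for all $i$ and $\ell'$, so the generator degree of $\text{Tor}_\ell(\mathcal{C})$ is $\ge\mathcal{S}_1(\mathcal{C})$. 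This is exactly what the paper uses: from $\mu(A(x))\in\text{Tor}_\ell(\mathcal{C})\subseteq\text{Tor}_{e-1}(\mathcal{C})$ and \eqref{ee0} one reads off $\text{depth}(\mu(A(x)))\ge\mathcal{S}_1(\mathcal{C})+1$, and the proof is finished in one line. No appeal to Lemma~\ref{fac1}, gcd degrees, or any ``delicate combinatorial step'' is needed.
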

\begin{proof} To prove the result, by Theorem \ref{Tstruc}(b), we see that $\text{Tor}_{e-1}(\mathcal{C})$ is a $\overline{\lambda}$-constacyclic code of length $np^s$ over $\overline{\mathcal{R}}.$ This, by Theorem \ref{p3}, implies that 
  \begin{equation}\label{ee0}
 \text{Depth}(\text{Tor}_{e-1}(\mathcal{C}))~=~\{\mathcal{S}_{1}(\mathcal{C})+1,\mathcal{S}_{1}(\mathcal{C})+2,\cdots, np^s\}.  
\end{equation}

We further note that each non-zero codeword  $c(x)\in \mathcal{C} \cap \langle \gamma^{e-1}\rangle$ can be written as $c(x)=\gamma^{e-1}c_1(x),$ where $c_1(x) \in \mathcal{R}_\lambda$ satisfies $\mu(c_1(x)) \neq 0.$ 
This, by Lemma \ref{l2}(b), implies that $\text{depth}(c(x))=\text{depth}(\mu(c_1(x))).$ This  gives
  \begin{equation}\label{ee1} \text{Depth}(\mathcal{C} \cap \langle \gamma^{e-1}\rangle)~=~ \text{Depth}(\text{Tor}_{e-1}(\mathcal{C}))~=~\{\mathcal{S}_{1}(\mathcal{C})+1,\mathcal{S}_{1}(\mathcal{C})+2,\cdots, np^s\}. \end{equation}
Next we assert that   \begin{equation}\label{ee2}
 \text{depth}(c(x))~\geq~\mathcal{S}_{1}(\mathcal{C})+1 \text{~~~for each~~~} c(x)\in \mathcal{C} \setminus \langle \gamma^{e-1}\rangle.
 \end{equation}

To prove this assertion,  let $0 \leq t \leq e-2$ be fixed, and let $c(x)\in \mathcal{C} \cap \big(\langle \gamma^{t}\rangle \setminus \langle \gamma^{t+1}\rangle\big).$ It is easy to see that the codeword $c(x)$ can be written as $c(x)=\gamma^{t}g(x),$ where $g(x)\in\mathcal{R}_{\lambda}$ satisfies $\mu(g(x)) \neq 0.$ This, by Lemma \ref{l2}(a), implies that $\text{depth}(c(x)) \geq  \text{depth}(\mu(g(x))).$ As $\mu(g(x)) \in \text{Tor}_{t}(\mathcal{C})\subseteq \text{Tor}_{e-1}(\mathcal{C}),$ by \eqref{ee0}, we see that $ \text{depth}(c(x)) \geq \text{depth}(\mu(g(x))) \geq \mathcal{S}_{1}(\mathcal{C})+1,$ which proves \eqref{ee2}. 

Now by \eqref{ee1} and \eqref{ee2}, the desired result follows. 
  \end{proof}

To illustrate the above theorem, we determine depth spectra of all 2-constacyclic codes of length $18$ over $\mathbb{Z}_9.$
\begin{ex} By Theorem \ref{Tstruc}(a), we see that all 2-constacyclic codes  of length $18$ over $\mathbb{Z}_9$ are given by $\mathcal{C}_{t}=\langle (x^2-8)^{t}\rangle,$ where  $0 \leq t \leq 18.$ Now by applying Theorem \ref{Tspec}, we have the following:
 \begin{center}\begin{tabular}{ll}
\begin{tabular}{ |c|c|c|c| } 
\hline
t & $|\mathcal{C}_t|$ &  $\text{Depth}(\mathcal{C}_t)$  \\
\hline
$0 \leq t \leq 9$ & $3^{36-2t}$ & $\{1,2,\cdots,18\}$  \\ 
 $t=10$& $3^{16}$ & $\{3,4,\cdots,18\}$  \\
$t=11$& $3^{14}$ & $\{5,6,\cdots,18\}$  \\
$t=12$& $3^{12}$ & $\{7,8,\cdots,18\}$  \\
$t=13$& $3^{10}$ & $\{9,10,\cdots,18\}$\\
\hline
\end{tabular}
& \begin{tabular}{ |c|c|c|c| } 
\hline
t & $|\mathcal{C}_t|$ &  $\text{Depth}(\mathcal{C}_t)$  \\
\hline
$t=14$ & $3^{8}$ & $\{11,12,\cdots,18\}$\\ 
$t=15$ & $3^{6}$ & $\{13,14,\cdots,18\}$\\ 
$t=16$ & $3^{4}$ & $\{15,16,17,18\}$  \\
$t=17$ & $3^{2}$ & $\{17,18\}$  \\
$t=18$ & $1$ & $\emptyset$  \\
\hline
\end{tabular}
 \end{tabular}
\end{center}  \qed \end{ex}
In the following theorem, we consider the case $\overline{\lambda}=1,$ and  we determine depth spectra of all $\lambda$-constacyclic codes of length $np^s$ over $\mathcal{R}.$   As $\lambda=\alpha+\gamma \beta$ with $\alpha \in \mathcal{T}$ and $\beta$ a unit in $ \mathcal{R},$ one can easily observe that $\overline{\lambda} =1$ if and only if  $\alpha =1,$ which holds if and only if $\lambda=1+\gamma \beta.$ When $\overline{\lambda}=1,$ without any loss of  generality, we can take $f_1(x)=x-1.$
  \begin{thm}\label{Tspec1} Let $\lambda= 1+\gamma \beta,$ where $\beta$ is a unit  in $\mathcal{R}.$ Let $\mathcal{C} =\langle (x-1)^{k_1} \prod\limits_{\ell=2}^{r} f_\ell(x)^{k_\ell} \rangle$ be a non-trivial $\lambda$-constacyclic code of length $np^s$ over $\mathcal{R}$ with the $i$th torsion code as $\text{Tor}_i(\mathcal{C})=\langle (x-1)^{\tau_1(i)}\prod\limits_{\ell=2}^{r}\overline{ f_\ell(x)}^{\tau_\ell(i)} \rangle$ for $0 \leq i \leq e-1,$  where $0 \leq k_\ell  \leq ep^s$ and $\tau_\ell(i)=\min\{(i+1)p^s, k_\ell\}-\min\{ i p^s, k_\ell\}$ for each $i$ and $\ell.$ 
Then the depth spectrum of the code $\mathcal{C}$ is given by
\begin{equation*}\text{Depth}(\mathcal{C})~=~\left\{\begin{array}{ll}
\{1,2,\cdots, np^s\}  &\text{if~ } 0 \leq k_1 <\max\{0,(e-n)p^s\};\\
  \{1,2,\cdots, ep^s-k_1\} \cup \{p^s+\mathcal{S}_{2}(\mathcal{C})+1,p^s+\mathcal{S}_{2}(\mathcal{C})+2,\cdots, np^s\} & \text{otherwise}.
\end{array}\right. \end{equation*}
\end{thm}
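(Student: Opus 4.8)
The plan is to run a case split on the size of $k_1$ relative to $(e-n)p^s$, handling each regime by the same two-sided strategy used in Theorem \ref{Tspec}: bound the depth of every codeword from above by analysing $\mathcal{C}\cap\langle\gamma^{e-1}\rangle$ together with Lemma \ref{ll3}, and bound it from below using Lemma \ref{l2}(a) and the torsion codes. The key new phenomenon compared with the $\overline{\lambda}\neq 1$ case is that now $x-1$ itself divides $x^{np^s}-\overline{\lambda}=x^{np^s}-1$, so multiplication by $(1-x)$ is no longer "invertible enough'' and the cyclic-type bound of Proposition \ref{p2} (small depths $1,2,\dots$ also appear) comes into play through $\text{Tor}_{e-1}(\mathcal{C})$, which is now a genuine cyclic code over $\overline{\mathcal{R}}$ with generator $(x-1)^{\tau_1(e-1)}\prod_{\ell\ge 2}\overline{f_\ell(x)}^{\tau_\ell(e-1)}$.

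First I would record, via Theorem \ref{Tstruc}(b) and Proposition \ref{p2}, the depth spectrum of $\text{Tor}_{e-1}(\mathcal{C})$: writing $\tau_1(e-1)=ep^s-\min\{(e-1)p^s,k_1\}$, one checks that $(x-1)^{t_0}\|\frac{x^{np^s}-1}{g_{e-1}(x)}$ where $g_{e-1}(x)$ is that generator and $t_0=p^s-\tau_1(e-1)$ when $\tau_1(e-1)\le p^s$ (equivalently $k_1\ge (e-1)p^s$), while $t_0=0$ otherwise; the total degree of $g_{e-1}$ is $\tau_1(e-1)+\mathcal{S}_2(\mathcal{C})$. Feeding this into Proposition \ref{p2} and combining with Lemma \ref{l2}(b), exactly as in \eqref{ee0}--\eqref{ee1}, yields $\text{Depth}(\mathcal{C}\cap\langle\gamma^{e-1}\rangle)=\text{Depth}(\text{Tor}_{e-1}(\mathcal{C}))$, which is $\{1,\dots,np^s\}$ in the first case and $\{1,\dots,t_0\}\cup\{\tau_1(e-1)+\mathcal{S}_2(\mathcal{C})+t_0+1,\dots,np^s\}$ in the second; a short arithmetic simplification should turn this into $\{1,\dots,np^s\}$ when $k_1<\max\{0,(e-n)p^s\}$ and into $\{1,\dots,ep^s-k_1\}\cup\{p^s+\mathcal{S}_2(\mathcal{C})+1,\dots,np^s\}$ otherwise (here one uses $\tau_1(e-1)+t_0=p^s$ and $ep^s-\min\{(e-1)p^s,k_1\}+\mathcal{S}_2=$ the appropriate boundary, together with $n\le e$ forcing $(x-1)^{k_1}\nmid x^{np^s}-1$-type degree bookkeeping). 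This computes the right-hand side of the claimed formula as a subset of $\text{Depth}(\mathcal{C})$.

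Next I would show the reverse inclusion $\text{Depth}(\mathcal{C})\subseteq$ (claimed set). For $c(x)\in\mathcal{C}\setminus\langle\gamma^{e-1}\rangle$, write $c(x)=\gamma^t g(x)$ with $0\le t\le e-2$ and $\mu(g(x))\neq 0$; then $\mu(g(x))\in\text{Tor}_t(\mathcal{C})\subseteq\text{Tor}_{e-1}(\mathcal{C})$, and Lemma \ref{l2}(a) gives $\text{depth}(c(x))\ge\text{depth}(\mu(g(x)))$. In the first case this already lands in $\{1,\dots,np^s\}$, trivially. In the second case, the subtlety is that $\text{Depth}(\text{Tor}_{e-1}(\mathcal{C}))$ has a gap, and a small-depth element of $\text{Tor}_{e-1}(\mathcal{C})$ could in principle be lifted to a $c(x)$ of large depth, so the lower bound $\text{depth}(c(x))\ge\text{depth}(\mu(g(x)))$ is not by itself enough to confine $\text{depth}(c(x))$ to the claimed set — I need to rule out depths in the forbidden interval $(ep^s-k_1,\,p^s+\mathcal{S}_2(\mathcal{C})]$. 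For this I would argue directly: if $\text{depth}(c(x))=t_1$ with $t_1\le np^s$, then by Proposition \ref{p1} the last $np^s-t_1+1$ coefficients of $(1-x)^{t_1-1}c(x)$ are a nonzero constant vector, so $(1-x)^{t_1-1}c(x)=\gamma^t h(x)$ for some $h$ with $(1-x)^{t_1-1}g(x)\equiv h(x)$; reducing mod $\gamma$ and using that $\text{Tor}_t(\mathcal{C})\subseteq\text{Tor}_{e-1}(\mathcal{C})$ is generated by $(x-1)^{\tau_1(e-1)}\prod_{\ell\ge2}\overline{f_\ell}^{\tau_\ell(e-1)}$, one sees $\overline{g}(x)$ is divisible by that generator, and an application of Lemma \ref{ll3} (with $\ell=\tau_1(e-1)$ against the factor $(x-1)^{\tau_1(e-1)}$, shifting the depth computation to the quotient) together with Theorem \ref{p3} applied to the non-$(x-1)$ part pins $\text{depth}(c(x))$ to be either $\le ep^s-k_1$ (coming from the "torsion is everything up to $t_0$'' regime, where $k_1\le(e-1)p^s$) or $\ge p^s+\mathcal{S}_2(\mathcal{C})+1$.

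\textbf{Main obstacle.} The hard part will be the second bullet's reverse inclusion, specifically excluding the gap interval $(ep^s-k_1,\,p^s+\mathcal{S}_2(\mathcal{C})]$ from $\text{Depth}(\mathcal{C})$: one must show that no codeword $c(x)=\gamma^t g(x)$ with $t\le e-2$ has depth strictly inside this window. The clean way is to prove that for such a $c(x)$, $(1-x)^{\,p^s}c(x)$ lies in $\langle\gamma^{e-1}\rangle$-or-deeper in a way that lets Lemma \ref{ll3} relate $\text{depth}(c(x))$ to a depth computed in $\text{Tor}_{e-1}(\mathcal{C})$ shifted by $p^s$ — this is where the identity $(1-x)^{p^s}\equiv(1-x^{p^s})$-type congruences modulo $\gamma$ (Kummer's theorem, as in Lemma \ref{fac1}) and the structure $\tau_1(0)=p^s$ when $k_1\ge p^s$ are essential. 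Once that reduction is in place, Proposition \ref{p2}/Theorem \ref{p3} close the gap, and combining with the first two paragraphs gives equality $\text{Depth}(\mathcal{C})=$ the claimed set in both cases.
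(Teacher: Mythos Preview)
Your proposal has a genuine gap at the very first step, and the error propagates through the rest. You claim that $\text{Depth}(\mathcal{C}\cap\langle\gamma^{e-1}\rangle)=\text{Depth}(\text{Tor}_{e-1}(\mathcal{C}))$ already equals the full claimed depth spectrum, but this is false whenever $k_1<(e-1)p^s$. Your formula $\tau_1(e-1)=ep^s-\min\{(e-1)p^s,k_1\}$ is incorrect: by definition $\tau_1(e-1)=\min\{ep^s,k_1\}-\min\{(e-1)p^s,k_1\}$, which equals $0$ (not $ep^s-k_1$) when $k_1\le(e-1)p^s$. Consequently, in both of your ``first'' and ``second'' cases with $k_1<(e-1)p^s$, the torsion code $\text{Tor}_{e-1}(\mathcal{C})$ has generator $\prod_{\ell\ge 2}\overline{f_\ell(x)}^{\tau_\ell(e-1)}$ and, by Proposition~\ref{p2}, depth spectrum $\{1,\dots,p^s\}\cup\{p^s+\mathcal{S}_2(\mathcal{C})+1,\dots,np^s\}$. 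This is strictly smaller than the claimed set as soon as $\mathcal{S}_2(\mathcal{C})>0$ and $ep^s-k_1>p^s$: the depths in the interval $\{p^s+1,\dots,\min(ep^s-k_1,\,p^s+\mathcal{S}_2(\mathcal{C}))\}$ (respectively $\{p^s+1,\dots,p^s+\mathcal{S}_2(\mathcal{C})\}$ in case~(i)) are \emph{not} realised by any codeword in $\langle\gamma^{e-1}\rangle$. No ``short arithmetic simplification'' will produce them.

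The paper handles this by an explicit construction you have entirely omitted. Using the identity $(x^n-1)^{p^s}=\gamma\mathcal{H}(x)$ in $\mathcal{R}_\lambda$ with $\mathcal{H}(x)$ a unit (this is where $\beta$ being a unit matters), one can trade powers of $(x-1)^{p^s}\prod_{\ell\ge 2}f_\ell(x)^{p^s}$ for powers of $\gamma$. The paper builds codewords $c_u(x)$ (case $k_1<(e-n)p^s$) and $a_u(x)$ (case $(e-n)p^s\le k_1<(e-1)p^s$) of the shape $(1-x)^{k_1+u}\prod_{\ell\ge 2}f_\ell(x)^{ep^s}\mathcal{G}(x)^{e-1}$, checks that multiplying by the right power of $(1-x)$ collapses them to $\gamma^{e-1}\prod_{\ell\ge 2}f_\ell(x)^{p^s}$ of depth exactly $p^s$, and then invokes Lemma~\ref{ll3} to read off $\text{depth}(c_u(x))$ and $\text{depth}(a_u(x))$; this fills in the missing range. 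For the reverse inclusion in the middle case, the paper does not use your proposed ``$(1-x)^{p^s}c(x)\in\langle\gamma^{e-1}\rangle$'' reduction (which does not obviously yield a depth equality via Lemma~\ref{ll3}); instead it writes an arbitrary nonzero $c(x)$ in the factored form $(1-x)^{h_1}\prod_{\ell\ge 2}f_\ell(x)^{h_\ell}h(x)$ with $h$ coprime to $x^n-1$, and splits on whether all $h_\ell\ge ep^s$ for $\ell\ge 2$: if so, $(1-x)^{ep^s-h_1}c(x)=0$ forces $\text{depth}(c(x))\le ep^s-k_1$; if not, a degree argument via Theorem~\ref{Tstruc}(b) on the auxiliary code $\langle(x-1)^{ep^s}\prod_{\ell\ge 2}f_\ell(x)^{k_\ell}\rangle$ shows $\text{deg}\big((1-x)^{p^s+\mathcal{S}_2(\mathcal{C})}c(x)\big)\ge p^s+\mathcal{S}_2(\mathcal{C})$, whence $\text{depth}(c(x))\ge p^s+\mathcal{S}_2(\mathcal{C})+1$ by Proposition~\ref{p1}.
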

\begin{proof} 
To prove the result, we see that each non-zero codeword $c(x)\in \mathcal{C} \cap \langle \gamma^{e-1}\rangle$  can be expressed as $c(x)=\gamma^{e-1}c_1(x),$ where $c_1(x) \in \mathcal{R}_\lambda$ satisfies $\mu(c_1(x)) \neq 0.$  This, by Lemma \ref{l2}(b), implies that $\text{depth}(c(x))=\text{depth}(\mu(c_1(x))),$ which gives 
\begin{equation}\label{sub1}
\text{Depth}(\text{Tor}_{e-1}(\mathcal{C}))~=~ \text{Depth}(\mathcal{C} \cap \langle \gamma^{e-1} \rangle )~\subseteq ~\text{Depth}(\mathcal{C}).\end{equation} 
In  $\mathcal{R}_\lambda,$ it  is easy to see that $(x^n-1)^{p^s}=\gamma \mathcal{H}(x),$ where $\mathcal{H}(x)$ is a unit in $\mathcal{R}_\lambda.$ Let $\mathcal{G}(x)\in \mathcal{R}_\lambda$ satisfy  $\mathcal{G}(x)\mathcal{H}(x)=1$ in $\mathcal{R}_\lambda.$
 Now we shall consider the following three cases separately: (i) $0 \leq k_1 < \max\{0,(e-n)p^s\},$ (ii) $ \max\{0,(e-n)p^s\} \leq k_1< (e-1)p^s,$ and (iii) $(e-1)p^s \leq k_1 \leq ep^s.$  

\begin{description}

\item[(i)] Let $0 \leq k_1 <\max\{0,(e-n)p^s\}.$ Here we have $(e-n)p^s>0$ and $\tau_1{(e-1)}=0.$ This gives $\text{Tor}_{e-1}(\mathcal{C})=\langle \prod\limits_{\ell=2}^{r}\overline{ f_\ell(x)}^{\tau_\ell(e-1)} \rangle,$ which, by Proposition \ref{p2}, implies that   $$\text{Depth}(\text{Tor}_{e-1}(\mathcal{C}))~=~\{1,2,\cdots, p^s\}\cup\{p^s+\mathcal{S}_{2}(\mathcal{C})+1,p^s+\mathcal{S}_{2}(\mathcal{C})+2,\cdots, np^s\}.  $$ This, by \eqref{sub1}, further implies that \begin{equation}\label{e4}
\{1,2,\cdots, p^s\}\cup\{p^s+\mathcal{S}_{2}(\mathcal{C})+1,p^s+\mathcal{S}_{2}(\mathcal{C})+2,\cdots, np^s\} ~\subseteq~\text{Depth}(\mathcal{C}).  \end{equation}
For $0\leq u \leq (n-1)p^s-1,$ let us define
 \begin{equation*}  c_u(x)~=~(-1)^{(e-1)p^s}(1-x)^{(e-n)p^s+u}\prod\limits_{\ell=2}^{r}\overline{f_\ell(x)}^{ep^s}\mathcal{G}(x)^{e-1} . \end{equation*}
Now as $k_1 <(e-n)p^s,$ we note that $c_u(x) \in \mathcal{C}$ for $0\leq u \leq (n-1)p^s-1.$  This implies that   \begin{equation*} (1-x)^{(n-1)p^s-u}c_u(x)~=~(x-1)^{(e-1)p^s}\prod\limits_{\ell=2}^{r}f_\ell(x)^{ep^s}\mathcal{G}(x)^{e-1}~=~\gamma^{e-1}\prod\limits_{\ell=2}^{r}f_\ell(x)^{p^s}\in\mathcal{C} \text{~~~for~~~} 0 \leq u \leq (n-1)p^s-1.\end{equation*}   Further, by Lemma \ref{l2}(b), we see that   \begin{equation*} \text{depth}\big( (1-x)^{(n-1)p^s-u}c_u(x)\big)~=~\text{depth}(\gamma^{e-1}\prod\limits_{\ell=2}^{r}f_\ell(x)^{p^s})~=~\text{depth}(\prod\limits_{\ell=2}^{r}\overline{f_\ell(x)}^{p^s}) \text{~~~for~each~~} u.  \end{equation*} 
We also observe that $(1-x)^{p^s}\prod\limits_{\ell=2}^{r}\overline{f_\ell(x)}^{p^s}=0$ in $\overline{\mathcal{R}}_\lambda$ and $ \text{deg }\big((1-x)^{p^s-1}\prod\limits_{\ell=2}^{r}\overline{f_\ell(x)}^{p^s}\big) =np^s-1.$ This, by Proposition \ref{p1}, implies that   \begin{equation*} D^{p^s-1}(\prod\limits_{\ell=2}^{r}\overline{f_\ell(x)}^{p^s}) ~\neq ~(0,0,\cdots,0)\in \mathcal{R}^{(n-1)p^s+1} \text{~~and~~} D^{p^s}(\prod\limits_{\ell=2}^{r}\overline{f_\ell(x)}^{p^s})~ =~(0,0,\cdots,0)\in \mathcal{R}^{(n-1)p^s},\end{equation*}  which gives    \begin{equation*}  \text{depth}\big((1-x)^{(n-1)p^s-u}c_u(x)\big)~=~\text{depth}(\prod\limits_{\ell=2}^{r}\overline{f_\ell(x)}^{p^s})~=~p^s.\end{equation*} 
Now by applying Lemma \ref{ll3}, we obtain $\text{depth}(c_u(x))=(n-1)p^s-u+p^s$ for $0\leq u \leq (n-1)p^s-1.$  This implies that   \begin{equation*} \{p^s+1,p^s+2,\cdots,np^s\} ~\subseteq~ \text{Depth}(\mathcal{C}).\end{equation*}  From this and by \eqref{e4},  we obtain $\text{Depth}(\mathcal{C})=\{1,2,\cdots, np^s\}.$  

 \item[(ii)] Next let $\max\{0,(e-n)p^s\} \leq k_1 <(e-1)p^s.$ Here we have $\tau_1(e-1)=0,$ which gives $\text{Tor}_{e-1}(\mathcal{C})=\langle \prod\limits_{i=2}^{r} \overline{ f_i(x)}^{\tau_i(e-1)} \rangle.$ This, by Proposition \ref{p2}, implies that   \begin{equation*} \text{Depth}(\text{Tor}_{e-1}(\mathcal{C}))~=~\{1,2,\cdots, p^s\}\cup\{p^s+\mathcal{S}_{2}(\mathcal{C})+1,p^s+\mathcal{S}_{2}(\mathcal{C})+2,\cdots, np^s\}.\end{equation*} 
 From  this and by \eqref{sub1}, we get   \begin{equation}\label{e5}
\text{Depth}(\mathcal{C} \cap \langle \gamma^{e-1} \rangle )~ =~\{1,2,\cdots, p^s\}\cup\{p^s+\mathcal{S}_{2}(\mathcal{C})+1,p^s+\mathcal{S}_{2}(\mathcal{C})+2,\cdots, np^s\}~\subseteq~\text{Depth}(\mathcal{C}).  \end{equation}

For $0\leq u < (e-1)p^s-k_1,$ let us define
  \begin{equation*}  a_u(x)~=~(-1)^{(e-1)p^s}(1-x)^{k_1+u}\prod\limits_{\ell=2}^{r}f_\ell(x)^{ep^s}\mathcal{G}(x)^{e-1} .  \end{equation*}

Further, we note that $a_u(x)\in \mathcal{C},$  which implies that   \begin{equation*}  (1-x)^{(e-1)p^s-k_1-u}a_u(x)~=~(x-1)^{(e-1)p^s}\prod\limits_{\ell=2}^{r}f_\ell(x)^{ep^s}\mathcal{G}(x)^{e-1}~=~\gamma^{e-1}\prod\limits_{\ell=2}^{r}f_\ell(x)^{p^s}\in\mathcal{C}.  \end{equation*}   Next by Lemma \ref{l2}(b), we see that   \begin{equation*}  \text{depth}\big((1-x)^{(e-1)p^s-k_1-u}a_u(x)\big)~=~\text{depth}(\gamma^{e-1}\prod\limits_{\ell=2}^{r}f_\ell(x)^{p^s} )~=~\text{depth}(\prod\limits_{\ell=2}^{r}\overline{f_\ell(x)}^{p^s}) \text{~~for~each~~}u.  \end{equation*}  Further, by Proposition \ref{p1}, we see that $\text{depth}\big((1-x)^{(e-1)p^s-k_1-u}a_u(x)\big)=\text{depth}(\prod\limits_{\ell=2}^{r}\overline{f_\ell(x)}^{p^s})=p^s,$ which, by Lemma \ref{ll3}, implies that $\text{depth}(a_u(x))=(e-1)p^s-k_1-u+p^s$ for $0\leq u < (e-1)p^s-k_1.$  This gives $\{p^s+1,p^s+2,\cdots,ep^s-k_1\} \subseteq \text{Depth}(\mathcal{C}) .$ From this and by \eqref{e5}, we see that \begin{equation}\label{e6}\{1,2,\cdots, ep^s-k_1\}\cup \{p^s+\mathcal{S}_{2}(\mathcal{C})+1,p^s+\mathcal{S}_{2}(\mathcal{C})+2,\cdots, np^s\} ~\subseteq~\text{Depth}(\mathcal{C}).  \end{equation}

Now we assert that \begin{equation}\label{e7}\text{Depth}(\mathcal{C})~\subseteq~ \{1,2,\cdots, ep^s-k_1\}\cup\{p^s+\mathcal{S}_{2}(\mathcal{C})+1,p^s+\mathcal{S}_{2}(\mathcal{C})+2,\cdots, np^s\}.  \end{equation}

When $ep^s-k_1\geq p^s+\mathcal{S}_{2}(\mathcal{C})+1,$ we see that $$\{1,2,\cdots, ep^s-k_1\}\cup\{p^s+\mathcal{S}_{2}(\mathcal{C})+1,p^s+\mathcal{S}_{2}(\mathcal{C})+2,\cdots, np^s\}~=~\{1,2,\cdots, np^s\},$$ and hence \eqref{e7} holds in this case.  So from now on, we assume that $ep^s-k_1< p^s+\mathcal{S}_{2}(\mathcal{C})+1.$ To prove the assertion \eqref{e7}, it suffices to prove the following:
 \begin{equation}\label{e10} \text{either~~~} \text{depth}(c(x))~ \leq ~ep^s-k_1 \text{~~~or~~~} \text{depth}(c(x)) ~\geq ~p^s+\mathcal{S}_{2}(\mathcal{C})+1  \text{~~~for~each~~~} c(x)(\neq 0) \in \mathcal{C}.  \end{equation} 
 
To do this, we note that each non-zero codeword $c(x)\in \mathcal{C}$  can be written as $c(x)=(1-x)^{h_1}\prod\limits_{\ell=2}^{r}f_\ell(x)^{h_\ell}h(x),$ where $h_\ell \geq k_\ell $  for $1 \leq \ell \leq r,$ and $h(x) \in \mathcal{R}[x]$ is such that $\mu(h(x)) \neq 0$ and $h(x)$ is coprime to  $x^n-1$ in $\mathcal{R}[x].$ Now the following two cases arise: \textbf{A.} $h_\ell \geq ep^s$ for  $2 \leq \ell \leq r,$ and \textbf{B.} there exists an integer $t$ satisfying $2 \leq t \leq r$ and  $h_t <ep^s.$
\\\textbf{A.} Let us suppose that $h_\ell \geq ep^s$ for $2 \leq \ell \leq r.$ As $c(x) \in \mathcal{C}$ is a non-zero codeword, we must have $h_1<ep^s.$  We further note that   \begin{equation*} (1-x)^{ep^s-h_1}c(x)~=~(1-x)^{ep^s}\prod\limits_{\ell=2}^{r}f_\ell(x)^{h_\ell}h(x)~=~0 \text{~~~in~~~} \mathcal{R}_\lambda.  \end{equation*}  This implies that $D^{ep^s-h_1}(c(x))=(0,0,\cdots,0)\in \mathcal{R}^{np^s-ep^s+h_1},$ which further implies that $\text{depth}(c(x))\leq ep^s-h_1\leq ep^s-k_1.$ This proves \eqref{e10} in this case.

\textbf{B.} Now suppose that there exists an integer $t$ satisfying  $2 \leq t\leq r$ and  $h_t <ep^s.$ As $h(x)$ is coprime to $x^n-1$ in $\mathcal{R}[x],$  we see that $h(x)$ is a unit in $\mathcal{R}_\lambda.$ Further, since $h_t< ep^s,$ we note that  \begin{equation*} (1-x)^{p^s+\mathcal{S}_{2}(\mathcal{C})}  c(x)~=~(1-x)^{p^s+\mathcal{S}_2(\mathcal{C})+h_1}\prod\limits_{\ell=2}^{r}f_\ell(x)^{h_\ell} h(x)\not\in\langle (x^n-1)^{ep^s}\rangle =\{0\}\text{~~~in~~~}\mathcal{R}_\lambda. \end{equation*} That is, we have $(1-x)^{p^s+\mathcal{S}_{2}(\mathcal{C})}  c(x)~\neq~ 0$ in $\mathcal{R}_\lambda.$   

Now as $p^s+\mathcal{S}_2(\mathcal{C})+h_1 \geq p^s+\mathcal{S}_{2}(\mathcal{C})+k_1 \geq ep^s,$ there exist integers $u$ and $v$ satisfying $2 \leq u\leq r,$ $0 \leq v\leq e-1,$  $h_u <(v+1)p^s$ and $h_\ell \geq vp^s$ for $1 \leq \ell \leq r.$ So we can write   \begin{equation*} (1-x)^{p^s+\mathcal{S}_2(\mathcal{C})+h_1}\prod\limits_{\ell=2}^{r}f_\ell(x)^{h_\ell}~=~(-1)^{vp^s}\gamma^v (1-x)^{p^s+\mathcal{S}_2(\mathcal{C})+h_1-vp^s}\prod\limits_{\ell=2}^{r}f_\ell(x)^{h_\ell-vp^s}\mathcal{H}(x)^{v}.  \end{equation*}

Next we assert that
 \begin{equation}\label{e12}
\text{deg }\Big((1-x)^{p^s+\mathcal{S}_{2}(\mathcal{C})}c(x)\Big) ~\geq~ p^s+\mathcal{S}_{2}(\mathcal{C}).
\end{equation}
To prove this assertion, we note that   \begin{equation*} (1-x)^{p^s+\mathcal{S}_{2}(\mathcal{C})}c(x)~=~(1-x)^{h_1+p^s+\mathcal{S}_{2}(\mathcal{C})}\prod\limits_{\ell=2}^{r}f_\ell(x)^{h_\ell}h(x)\in\langle (x-1)^{ep^s}\prod\limits_{\ell=2}^{r}f_\ell(x)^{k_\ell} \rangle~=~\mathcal{C}_{1}\text{~(say)}.\end{equation*}  Let us take   \begin{equation*} A(x)~=~(-1)^{vp^s} (1-x)^{p^s+\mathcal{S}_2(\mathcal{C})+h_1-vp^s}\prod\limits_{\ell=2}^{r}f_\ell(x)^{h_\ell-vp^s}\mathcal{H}(x)^{v}h(x) \end{equation*}  so that  $(1-x)^{p^s+\mathcal{S}_{2}(\mathcal{C})}c(x)=\gamma^v A(x).$ Note that $\mu(A(x))\neq 0.$ Now as \begin{equation*}(1-x)^{p^s+\mathcal{S}_{2}(\mathcal{C})}c(x) ~=~\gamma^v A(x)\in\mathcal{C}_{1},\end{equation*} we see, by Theorem \ref{Tstruc}(b), that   \begin{equation*} \mu(A(x))\in \text{Tor}_{e-1}(\mathcal{C}_{1})~=~\langle (x-1)^{p^s} \prod\limits_{\ell=2}^{r}\overline{ f_\ell(x)}^{\tau_\ell(e-1)}\rangle. \end{equation*} This implies that $\text{deg }(\mu(A(x))) \geq p^s+\mathcal{S}_{2}(\mathcal{C}),$ which further implies that
\begin{equation*}  \text{deg }\Big((1-x)^{p^s+\mathcal{S}_{2}(\mathcal{C})}c(x)\Big)~=~\text{deg }(\gamma^vA(x))~\geq ~\text{deg }(\mu(A(x))) ~\geq~ p^s+\mathcal{S}_{2}(\mathcal{C}).\end{equation*} 
Further, by applying Proposition \ref{p1} and by \eqref{e12}, we get  \begin{equation*} D^{p^s+\mathcal{S}_{2}(\mathcal{C})}(c(x)) ~\neq~ (0,0,\cdots,0)\in\mathcal{R}^{(n-1)p^s-\mathcal{S}_{2}(\mathcal{C})}. \end{equation*}  This implies that $\text{depth }(c(x)) \geq p^s+\mathcal{S}_{2}(\mathcal{C})+1,$ which proves  \eqref{e10}. Now by \eqref{e7} and \eqref{e10}, we get the desired result.
\item[(iii)] Finally, let $(e-1)p^s \leq k_1 \leq ep^s.$ Here for $0 \leq i \leq e-2,$ we have $\tau_{1}(i)=p^s,$ which gives    \begin{equation*} \text{Tor}_{i}(\mathcal{C})~=~\bigg\langle (x-1)^{p^s}\prod\limits_{\ell=2}^{r} \overline{ f_\ell(x)}^{\tau_\ell(i)}\bigg\rangle. \end{equation*}  We also note that  $\tau_1(e-1)=k_1-(e-1)p^s,$ which implies that
  \begin{equation*} \text{Tor}_{e-1}(\mathcal{C})~=~\bigg\langle (x-1)^{k_1-(e-1)p^s}\prod\limits_{\ell=2}^{r} \overline{ f_\ell(x)}^{\tau_\ell(e-1)} \bigg\rangle.  \end{equation*}  This, by Proposition \ref{p2}, implies that
   \begin{equation*} \text{Depth}(\text{Tor}_{i}(\mathcal{C}))~=~\Big\{p^s+\sum\limits_{\ell=2}^{r}d_\ell\tau_\ell(i)+1,p^s+\sum\limits_{\ell=2}^{r}d_\ell\tau_\ell(i)+2,\cdots, np^s\Big\} \text{~~~for~~~} 0 \leq i \leq e-2 \end{equation*}   and that   \begin{equation*} \text{Depth}(\text{Tor}_{e-1}(\mathcal{C}))~=~\{1,2,\cdots, ep^s-k_1\}\cup\{p^s+\mathcal{S}_{2}(\mathcal{C})+1,p^s+\mathcal{S}_{2}(\mathcal{C})+2,\cdots, np^s\}.\end{equation*} 
 Now by \eqref{sub1}, we get  \begin{equation}\label{ee11} \text{Depth}(\mathcal{C} \cap \langle \gamma^{e-1}\rangle)~=~\{1,2,\cdots, ep^s-k_1\}\cup\{p^s+\mathcal{S}_{2}(\mathcal{C})+1,p^s+\mathcal{S}_{2}(\mathcal{C})+2,\cdots, np^s\} ~\subseteq~ \text{Depth}(\mathcal{C}).\end{equation}
Next we assert that   \begin{equation}\label{ee12}
 \text{depth}(c(x)) ~\geq~   p^s+\mathcal{S}_{2}(\mathcal{C})+1 \text{~~~for~each~~~} c(x)\in \mathcal{C} \setminus \langle \gamma^{e-1}\rangle.
 \end{equation}
To prove this assertion,  let $0 \leq t \leq e-2$ be fixed.  We note that each $c(x)\in \mathcal{C} \cap  \big(\langle\gamma^{t}\rangle \setminus \langle \gamma^{t+1}\rangle\big)$ can be written as $c(x)=\gamma^{t}g(x),$ where $g(x)\in\mathcal{R}_{\lambda}$ satisfies $\mu(g(x)) \neq 0.$ This, by Lemma \ref{l2}(a), implies that $\text{depth}(c(x)) \geq  \text{depth}(\mu(g(x))).$ As $\mu(g(x)) \in \text{Tor}_{t}(\mathcal{C}),$ we have   \begin{equation}\label{s1}
 \text{depth}(c(x))~\geq~  \text{depth}(\mu(g(x))) ~\geq~ p^s+ \sum\limits_{\ell=2}^{r}d_\ell\tau_\ell(t)+1.
 \end{equation} Since $\text{Tor}_{t}(\mathcal{C}) \subseteq \text{Tor}_{e-1}(\mathcal{C}),$ we see that $ (x-1)^{k_1-(e-1)p^s}\prod\limits_{\ell=2}^{r} \overline{ f_\ell(x)}^{\tau_\ell(e-1)}$ divides $ (x-1)^{p^s}\prod\limits_{\ell=2}^{r} \overline{ f_\ell(x)}^{\tau_\ell(t)}$ in $\overline{\mathcal{R}}[x],$ which implies that   \begin{equation*} \sum\limits_{\ell=2}^{r}d_\ell\tau_\ell(t)~=~\text{deg }\Big(\prod\limits_{\ell=2}^{r} \overline{ f_\ell(x)}^{\tau_\ell(t)}\Big)~\geq~ \text{deg }\Big(\prod\limits_{\ell=2}^{r} \overline{ f_\ell(x)}^{\tau_\ell(e-1)}\Big)~=~\sum\limits_{\ell=2}^{r} d_{\ell} \tau_{\ell}(e-1)~=~\mathcal{S}_2(\mathcal{C}).  \end{equation*} 

This, by \eqref{s1}, implies that   \begin{equation*} \text{depth}(c(x))~\geq~ p^s+\sum\limits_{\ell=2}^{r}d_\ell\tau_\ell(t)+1 ~ \geq~ p^s+\mathcal{S}_2(\mathcal{C})+1,  \end{equation*} which  proves  \eqref{ee12}. Now the desired result follows immediately from \eqref{ee11} and \eqref{ee12}. 
\end{description}This completes the proof of the theorem.
\end{proof}
To illustrate the above theorem, we determine depth spectra of some negacyclic codes of length $56$ over $GR(4,4)$ in the following example. 
 \begin{ex} By Theorem \ref{Tstruc}(a),  we see that all negacyclic codes of length $56$ over $GR(4,4)$ are given by $\mathcal{C}_{k_1,k_2,k_3}=\langle (x+3)^{k_1}(x^3 + 2x^2 + x + 3)^{k_2}( x^3 + 3x^2 + 2x + 3)^{k_3}\rangle,$ where  $0 \leq k_1,k_2,k_3 \leq 2^4.$ Now by applying Theorem \ref{Tspec1}, we have the following:
  \begin{center}\hspace{-6.9mm}\begin{tabular}{ll}
 \begin{tabular}{ |c|c|c|c| } 
\hline
$(k_1,k_2,k_3)$ & $|\mathcal{C}_{k_1,k_2,k_3}|$ & $\text{Depth}(\mathcal{C}_{k_1,k_2,k_3})$  \\
\hline
$(14,12,13)$ & $2^{92}$ & $ \{1,2\}\cup\{36,37,\cdots,56\}$  \\ 
 $(14,14,11)$ & $2^{92}$ & $ \{1,2\}\cup\{36,37,\cdots,56\}$  \\ 
$(14,16,9)$ & $2^{92}$ & $ \{1,2\} \cup\{36,37,\cdots,56\}$  \\ 
$(14,10,15)$ & $2^{92}$ & $ \{1,2\}\cup\{36,37,\cdots,56\}$  \\ 
$(15,16,5)$ & $2^{136}$ & $ \{1\} \cup\{33,34,\cdots,56\}$  \\ 
$(15,6,16)$ & $2^{124}$ & $ \{1\}\cup\{33,34,\cdots,56\}$  \\ 
\hline
\end{tabular}
& \begin{tabular}{ |c|c|c|c| } 
\hline
$(k_1,k_2,k_3)$ & $|\mathcal{C}_{k_1,k_2,k_3}|$ & $\text{Depth}(\mathcal{C}_{k_1,k_2,k_3})$  \\
\hline
$(7,6,5) $ & $2^{288}$ & $\{1,2,\cdots,56\}$  \\ 
$(7,3,4)$ & $2^{336}$ & $\{1,2,\cdots,56\}$  \\ 
$(16,5,16)$ & $2^{132}$ & $\{33,37,\cdots,56\}$  \\ 
$(10,6,17)$ & $2^{132}$ & $ \{1,2,\cdots,6\} \cup\{36,37,\cdots,56\}$  \\ 
$(4,9,10)$& $2^{204}$ & $\{1,2,\cdots,12\}\cup\{18,19,\cdots,56\}$  \\
$(13,6,10)$& $2^{204}$ & $\{1,2,3\}\cup\{15,16,\cdots,56\}$  \\ 
\hline \end{tabular}\end{tabular} \end{center}\qed  \end{ex}
  \section{Conclusion and Future work}\label{con}  
Let $\mathcal{R}$ be a finite commutative chain ring with the unique maximal ideal as $\langle \gamma \rangle.$  In this paper,  depth spectra of all repeated-root $(\alpha+\gamma \beta)$-constacyclic codes of arbitrary lengths over $\mathcal{R}$ are explicitly determined, where $\alpha$ is a non-zero element of the Teichm\"{u}ller set of $\mathcal{R}$ and $\beta $ is a unit in  $\mathcal{R}.$ 

It would be interesting  to determine depth distributions of all constacyclic codes of arbitrary lengths over $\mathcal{R}.$ It would be of great interest to explore more applications of  depth distributions of linear codes over finite rings in game theory,  communication theory and cryptography.

  \section{Acknowledgment}\label{acknow}   
I am grateful to my Ph.D. supervisor, Dr. Anuradha Sharma, for discussions and her help in preparing this manuscript.


\begin{thebibliography}{9}  
       \bibitem{yehuda} {R. Bar-Yehuda, T. Etzion and S. Moran}, \newblock{Rotating-table games and derivatives of words}, {\it Theor. Comput. Sci.} 108(2), pp. 311-329 (1993).
    \bibitem{berl} {E. R. Berlekamp}, \newblock{\it Algebraic Coding Theory}, \newblock{McGraw-Hill Book Company}, New York (1968).
      \bibitem{blackburn} {S. R. Blackburn, T. Etzion and K. G. Paterson}, \newblock{Permutation polynomials, de Brujin sequences and linear complexity}, {\it J. Comb. Theory} 76(1), pp. 55-82 (1996).
       \bibitem{calder} {A. R. Calderbank, A. R. Hammons Jr., P. V. Kumar, N. J. A. Sloane and P. Sol$\acute{e}$}, \newblock{A linear construction for certain Kerdock and Preparata codes}, {\it Bull. Amer. Math. Soc.} 29(2), pp. 218-222 (1993).
    \bibitem{cao1} {Y. Cao}, \newblock{On constacyclic codes over finite chain rings}, {\it Finite Fields Appl.} 24, pp. 124-135 (2013).
    \bibitem{chan} {A. H. Chan, R. A. Games and E. L. Key}, \newblock{On the complexities of de bruijn sequences}, {\it J. Comb. Theory} 33(3), pp. 233-246 (1982).
    \bibitem{etzio} {T. Etzion}, \newblock{The depth distribution - A new characterization for linear codes}, \newblock{\it IEEE Trans. Inform. Theory} 43(4), pp. 1361-1363 (1997).
    \bibitem{games} {R. A. Games and A. H. Chan}, \newblock{A fast algorithm for determining the complexity of binary sequences with period $2^n$}, {\it IEEE Trans. Inform. Theory} 29(1), pp. 144-166 (1983).
    \bibitem{hammons} {A. R. Hammons, P. V. Kumar, A. R. Calderbank, N. J. A. Sloane and P. Sol$\acute{e}$}, \newblock{The  $\mathbb{Z}_{4}$-linearity of Kerdock, Preparata, Goethals, and related codes}, \newblock{\it IEEE Trans. Inform. Theory}  40(2), pp. 301-319 (1994).
     \bibitem{kong} {B. Kong, X. Zheng and H. Ma}, \newblock{The depth spectrums of constacyclic codes over finite chain rings}, {\it Discrete Math.} 338(2), pp. 256-261 (2015).
    \bibitem{mcdon} {B. R. McDonald}, \newblock{\it Finite Rings with Identity}, \newblock{Marcel Dekker Press}, New York (1974).
    \bibitem{luo} {Y. Luo, F. Fu and V. K.-W. Wei}, \newblock{On the depth distribution of linear
codes}, \newblock{\it IEEE Trans. Inform. Theory} 46(6), pp. 2197-2203 (2000).
    \bibitem{mitch} {C. J. Mitchell}, \newblock{On integer-valued rational polynomials and depth distributions of binary codes}, \newblock{\it IEEE Trans. Inform. Theory} 44(7), pp. 3146-3150 (1998).
    \bibitem{norton} {G. Norton and A. S$\breve{a}$l$\breve{a}$gean}, \newblock{On the structure of linear cyclic codes over finite chain rings}, \newblock{\it Appl. Algebra Engrg. Comm. Comput.} 10(6), pp. 489-506 (2000).
    \bibitem{nech} {A. A. Nechaev}, \newblock{Kerdock code in a cyclic form}, \newblock{\it Discrete Math. Appl.} 1(4), pp. 365-384 (1991).
    \bibitem{sharma} {A. Sharma and T. Sidana}, \newblock{On the structure and distances of repeated-root constacyclic codes of prime power lengths over finite commutative chain rings}, \newblock{\it IEEE Trans. Inform. Theory} 65(2), pp. 1072-1084 (2019).
     \bibitem{sharma1} {A. Sharma and T. Sidana}, \newblock{Repeated-root constacyclic codes of arbitrary lengths over the Galois ring  $\text{GR}(p^2,m)$,} \newblock{\it Discrete Math. Algorithm Appl.}  10(3), DOI: 10.1142/S1793830918500362  (2018).
    \bibitem{zhang} {G. Zhang}, \newblock{The depth distribution of constacyclic codes over $\mathbb{F}_p + v\mathbb{F}_p$}, \newblock{\it IEEE Access} 6, pp. 44390-44395 (2018).
    \bibitem{zhu} {S. Zhu, K. Qian and X. Kai}, \newblock{A note on negacyclic self-dual codes over $\mathbb{Z}_{2^a}$}, \newblock{\it Discrete Math.} 312(22), pp. 3270-3275 (2012).
\end{thebibliography}
\end{document}